\providecommand{\U}[1]{\protect \rule{.1in}{.1in}}
\newtheorem{theorem}{Theorem}[section]
\newtheorem{assumption}{Assumption}[section]
\newtheorem{definition}{Definition}[section]
\newtheorem{example}{Example}[section]
\newtheorem{lemma}{Lemma}[section]
\newtheorem{remark}{Remark}[section]
\newenvironment{proof}[1][Proof]{\noindent \textbf{#1.} }{\  \rule{0.5em}{0.5em}}
\numberwithin{equation}{section}
\begin{document}

\title{Portfolio Credit Risk with Archimedean Copulas: Asymptotic Analysis and
Efficient Simulation}
\author{Hengxin Cui\thanks{\textrm{Manulife Financial Corporation, Canada. E-mail:
nordcui@gmail.com}}\quad Ken Seng Tan\thanks{\textrm{Division of Banking \&
Finance, Nanyang Business School, Nanyang Technological University, Singapore.
Email: kenseng.tan@ntu.edu.sg. ORCID Identifier: 0000-0002-2068-7829} }\quad Fan Yang\thanks{\textrm{Department of
Statistics and Actuarial Science, University of Waterloo, Canada. E-mail:
fan.yang@uwaterloo.ca}}}
\date{April 20, 2022}
\maketitle

\begin{abstract}
In this paper, we study large losses arising from defaults of a credit
portfolio. We assume that the portfolio dependence structure is modelled by
the Archimedean copula family as opposed to the widely used Gaussian copula.
The resulting model is new, and it has the capability of capturing extremal
dependence among obligors. We first derive sharp asymptotics for the tail
probability of portfolio losses and the expected shortfall. Then we
demonstrate how to utilize these asymptotic results to produce two variance
reduction algorithms that significantly enhance the classical Monte Carlo
methods. Moreover, we show that the estimator based on the proposed two-step
importance sampling method is logarithmically efficient while the estimator
based on the conditional Monte Carlo method has bounded relative error as the
number of obligors tends to infinity. Extensive simulation studies are
conducted to highlight the efficiency of our proposed algorithms for
estimating portfolio credit risk. In particular, the variance reduction
achieved by the proposed conditional Monte Carlo method, relative to the crude
Monte Carlo method, is in the order of millions.

\textbf{Keywords}: Portfolio credit risk; rare event simulation; Archimedean
copulas; importance sampling; conditional Monte Carlo

\end{abstract}

\section{Introduction}

Credit risk in banking and trading book is one of the largest financial risk
exposures for many financial institutions.
As such quantifying the risk of a credit portfolio is essential in credit risk management. One of the key challenges in credit risk management is the accurate modelling of dependence between obligors, particularly in the tails. This is attributed to two phenomena. First, many financial institutions' exposure to credit risk  do not just confine to a single obligor, but to a large portfolio of multiple obligors. Second, empirically we have been observing simultaneous defaults in large credit portfolios as financial institutions are affected by common macroeconomic or systemic factors. This suggests that obligors tend to exhibit stronger dependence in a stressed market and hence simultaneous defaults tend to be more likely.  For this reason,  the model for the
dependence structure of the default events has a direct impact on the tail of
the loss for a large portfolio. In light of these issues and challenges, the first objective of this paper is to analyze the large credit portfolio
loss when the obligors are modeled to be strongly dependent. From the asymptotic analysis, the second objective of the paper is to propose two variance reduction simulation algorithms that provide efficient estimation of the risk of a credit
portfolio.

To accomplish the above two objectives, we model the credit risk based on the so-called threshold models which are widely used to capture the event of default for an
individual obligor within the portfolio. A default in a threshold model occurs if
some critical random variable, usually called a latent variable, exceeds (or
falls below) a pre-specified threshold. The dependence among defaults then
stems from the dependence among the latent variables. It has been found that
the copula representation is a useful tool for studying the dependence
structure. Specifically, the copula of the latent variables determines the
link between marginal default probabilities for individual obligors and joint
default probabilities for groups of obligors. Most threshold models used in the
industry are based explicitly or implicitly on the Gaussian copula; see for
example, CreditMetrics \citep{gupton1997creditmetrics} and Moody's KMV system
\citep{kealhofer2001portfolio}. While the Gaussian copula models
can accommodate a wide range of  correlation structure, they are 
inadequate to model extremal dependence between the latent variables as
these models are known to exhibit weaker dependence in the tails of the
underlying random variables (see for example, Section 7.2.4 of \citealp{mcneil2015quantitative} for further discussion on tail dependence).  This limitation raises considerable concern as we have already noted earlier that when the market
condition worsens, simultaneous defaults can occur with nonnegligible
probability in large credit portfolios.   {To better reflect the empirical evidence, copulas that
can capture \textquotedblleft stronger\textquotedblright \ tail dependence of
obligors such as $t$-copula and its generalizations have been proposed
\citep[see][]{bassamboo2008portfolio,chan2010efficient,tang2019sharp}. As pointed out in Section 11.1.4 of
\cite{mcneil2015quantitative}, the Archimedean copula is another plausible class of dependence model for obligors.   The Archimedean copula offers great flexibility in modeling
dependence  as it is capable of covering
dependence structures ranging from independence to comonotonicity (the
perfect dependence). Typical examples of Archimedean copulas include
Clayton, Gumbel and Frank copulas. Because of their flexibility in dependence modeling,   the Archimedean
copulas have been applied in credit risks (\citealp{cherubini2004copula};
\citealp{hofert2010sampling}; \citealp{hofert2011cdo};
\citealp{naifar2011modelling}), insurance (\citealp{frees1998understanding};
\citealp{embrechts2001modelling}; \citealp{denuit2004bivariate};
\citealp{albrecher2011explicit}; \citealp{cossette2018dependent}), and many other
areas of applications such as   \cite{genest2007everything},
\cite{zhang2007bivariate}, and \cite{wang2003estimating}. See also 
\cite{charpentier2009tails}, \cite{hofert2013archimedean}, \cite{okhrin2013structure} and \cite{ZhuWangTan2016}
for higher dimensional applications of Archimedean copulas
and their generalizations (such as the hierarchical Archimedean copulas) in  finance and risk management.
Motivated by  their dependence modeling flexibility and their 
wide applications, this paper similarly uses the Archimedean copula  to model the dependence of obligors in order
 to account for market phenomenon of simultaneous defaults.
Moreover, as to be discussed in Section \ref{sslt}, when obligors are modeled with
Archimedean copulas, the threshold model can  {{similarly}} be understood as a one-factor
Bernoulli mixture model, which leads to great conveniences for asymptotic
analysis and simulations of large portfolio losses in the later sections.}

In terms of quantifying portfolio credit risk, the most popular measure is to study
the probability of large portfolio loss over a fixed time horizon, say, a
year \citep[see][]{glasserman2004tail, glasserman2007large, tang2019sharp}. 
The expected shortfall of large
portfolio loss, which has been found to be very useful in the risk management and the pricing of
credit instruments, is another important measure of credit risk. A general discussion on quantifying portfolio
credit risk can be found in \cite{hong2014estimating}. With the extremal
dependence being modelled by the Archimedean copula, there are no  analytic expressions
for the above two measures. This implies we need to rely on numerical methods to evaluate these measures. While the Monte Carlo (MC) simulation method is a popular alternate numerical tool, naive application of the method to evaluate these measures is  very inefficient since the event of defaults of high-quality
obligors is rare and the naive MC is notoriously known to be inefficient for rare-event applications. Hence, variance reduction techniques such as that based on the 
 importance sampling \citep{glasserman2005importance, bassamboo2008portfolio,glasserman2008fast} and the
conditional Monte Carlo method \citep{chan2010efficient} have been proposed to increase the efficiency of MC methods for estimating these measures. 

We now summarize the key contributions of the paper. we
study the large portfolio loss from the following two aspects.
By exploiting the threshold model and  using an Archimedean copula to
capture the dependence among latent variables, the first key contribution is to  derive sharp asymptotics for the two performance measures: the probability of large portfolio loss and the
expected shortfall. These results quantify the asymptotic behavior of the two
measures when the portfolio size is large and provide better understanding on
how dependence affects the large portfolio loss. While the effectiveness
of estimating the portfolio loss based on the asymptotic expansions may
deteriorate if the portfolio size is not sufficiently large enough, it is
still very useful as it provides a good foundation for the MC based algorithms that we are developing subsequently. In particular, the second key contribution of the paper is to exploit these asymptotic results and propose two efficient MC based methods for estimating the above two performance measures. More specifically,  the first one is a two-step full importance sampling algorithm that provides efficient estimators for both measures. Furthermore, we show that
the proposed estimator  of probability of large portfolio loss  is logarithmically efficient. The second algorithm, which is
based on the conditional Monte Carlo method, is shown to have
bounded relative error. Relative to the importance sampling method, the conditional Monte Carlo  algorithm has the advantage of its simplicity though it  
can only be used to estimate the probability of portfolio loss. Simulation
studies also show the better performance of the second algorithm than the
first one. Overall, both of them generate significant variance reductions when
compared to naive MC simulations.

The rest of the paper is organized as follows. We formulate our problem in
Section \ref{spf} and describe  Archimedean copula and regular variation  in
Section \ref{sp}. Main results are presented in Sections \ref{saa}, \ref{siss}
and \ref{scmcs}, with Section \ref{saa} derives the sharp asymptotics and the
latter two sections present our proposed efficient Monte Carlo algorithms and analyze
their performances. Through an extensive simulation study, Section \ref{snr}
provides further comparable analysis on the relative effectiveness of our proposed algorithms.
Proofs are relegated to Appendix.

\section{Problem Formulation\label{spf}}

Consider a large credit portfolio of $n$ obligors. Similar to
\cite{bassamboo2008portfolio}, we employ a static structural model for
portfolio loss by introducing latent variables $\{X_{1},\ldots,X_{n}\}$ so
that each obligor defaults if each latent variable $X_{i}$ exceeds some
pre-specified threshold $x_{i}$, for $i=1,2,...,n$. By denoting $c_{i}>0$ as the risk exposure at default that corresponds to obligor $i$, for $i=1,2,...,n$, the portfolio loss incurred from defaults
is given by
\begin{equation}
L_{n}=\sum_{i=1}^{n}c_{i}1_{\{X_{i}>x_{i}\}}, \label{first}%
\end{equation}
where $1_{A}$ is the indicator function of an event $A$. Such a threshold
model can first trace back to \cite{merton1974pricing}. Let $F_{i}$ and  $\overline{F}_{i}$ be, respectively, the marginal distribution function and marginal survival function of $X_{i}$. Then $F_{i}=1-\overline{F}_{i}$, for
$i=1,2,...,n$. In the  threshold model, $\overline{F}_{i}$ can be interpreted as the marginal default probability of obligor $i$ and we use  $p_{i}$ to denote it.

As pointed out in the last section, the dependence among the latent variables has a
direct impact on the tail of the loss for a large portfolio and their dependence structure is conveniently modelled via copulas. This is also 
highlighted in Lemma 11.2 of \cite{mcneil2015quantitative} that in a threshold
model the copula of the latent variables determines the link between marginal
default probabilities and portfolio default probabilities. To see this, let
$U_{i}=F_{i}(X_{i})$ and $p_i= \overline{F}_{i}$ for $i=1,\ldots,n$. It follows immediately from Lemma
11.2 of \cite{mcneil2015quantitative} that $(X_{i},x_{i})_{1\leq i\leq n}$ and
$(U_{i},p_{i})_{1\leq i\leq n}$ are two
equivalent threshold models. Then the portfolio loss is affected by the
dependence among the latent variables rather than the distribution of each
latent variable. This is also the reason why we conduct our analysis by focusing on the dependence structure of the obligors and with the  assumption that the dependence of $(U_{1},U_{2},...,U_{n})$ is adequately captured by an Archimedean copula.

Recall that the main focus of the paper is to study the credit portfolio for which the
portfolio consists of a large number of obligors and each obligor has low
default probability. While default events are rare, the potential loss is significant once they are triggered and with simultaneous defaults.   By the transformation $U_{i}=F_{i}(X_{i})$ for $i=1,\ldots,n$, the event
that obligor defaults $\{X_{i}>x_{i}\}$ is equivalent to $\{U_{i}>1-p_{i}\}$. From the theory of diversification, the
probability of large portfolio loss should diminish as $n$ increases. To
capture this feature, the individual default probability can also be expressed
as $p_{i}=l_{i}f_{n}$ for $i=1,\ldots,n$, where $f_{n}$ is a positive
deterministic function converging to $0$ as $n\rightarrow \infty$ and
$\{l_{1},\ldots,l_{n}\}$ are strictly positive constants accounting for
variations effect on different obligors. {We emphasize that on one hand, the assumption that $f_{n}$ converges to $0$ is to reflect the  diversification effect in a large portfolio that individual default probability diminishes as $n$ increases. On
the other hand, it provides mathematical convenience to derive sharp
asymptotics for the large portfolio loss (see discussions
after Theorem \ref{th4.1}) and to prove the algorithms' efficiency (Theorems
\ref{coasyoptimal} and \ref{coboundederror}). Such condition is also assumed in \cite{gordy2003risk}, \cite{bassamboo2008portfolio}, \cite{chan2010efficient} and
\cite{tang2019sharp}, for example.} {More detailed explanations on the assumption of $f_n$ are provided in Section \ref{sec:tail}.} With this representation, we can rewrite the overall
portfolio loss (\ref{first}) as
\begin{equation}
L_{n}=\sum_{i=1}^{n}c_{i}1_{\{U_{i}>1-l_{i}f_{n}\}}. \label{m1}%
\end{equation}
In the remaining of the paper, we use (\ref{m1}) to analyze the large
portfolio loss. To characterize the potential heterogeneity among obligors, we
further impose some restrictions on the sequence $\{(c_{i},l_{i}):i\geq1\}$,
as in \cite{bassamboo2008portfolio}.

\begin{assumption}
\label{ass1}Let the positive sequence $((c_{i},l_{i}):i\geq1)$ take values in
a finite set $\mathcal{W}$. By denoting $n_{j}$ as the number of each element
$(c_{j},l_{j})\in \mathcal{W}$ in the portfolio,  we further assume that $n_{j}/n$
converges to $w_{j}>0$, for each $j\leq|\mathcal{W}|$ as $n\rightarrow \infty$.
\end{assumption}

In practice, Assumption \ref{ass1} can be interpreted as a heterogeneous
credit portfolio that comprises of a finite number of homogeneous
sub-portfolios based on risk types and exposure sizes. We note that it is easy
to relax this assumption to the case where $c_{i}$ and $l_{i}$ are random
variables; see \cite{tong2016exposure} and \cite{tang2019sharp} for recent discussions.

\section{Preliminaries}

\label{sp}

\subsection{Archimedean copulas\label{sslt}}

Archimedean copulas have a simple closed form and can be represented by a
generator function $\phi$ as follows:
\begin{equation}
C(u_{1},\ldots,u_{n})=\phi^{-1}(\phi(u_{1})+\ldots+\phi(u_{n})), \label{arch}%
\end{equation}
where $C:[0,1]^{n}\rightarrow \lbrack0,1]$ is a copula function. The generator
function $\phi:[0,1]\rightarrow \lbrack0,\infty]$ is continuous, decreasing and
convex such that $\phi(1)=0$ and $\phi(0)=\infty$, and $\phi^{-1}$ is the
inverse of $\phi$. We further assume $\phi^{-1}$ is completely monotonic, i.e.
$(-1)^{i}\left(  \phi^{-1}\right)  ^{(i)}\geq0$ for all $i\in \mathbb{N}$,
which allows $\phi^{-1}$ to be a Laplace-Stieltjes (LS) transform of a
distribution function $G$ on $[0,\infty]$ such that $G(0)=0$. Let $V$ be a
random variable with a distribution function $G$ on $[0,\infty]$. The LS
transform of $V$ (or $G$) is defined as
\[
\phi^{-1}(s)=\mathcal{L}_{V}(s)=\int_{0}^{\infty}e^{-sv}\mathrm{d}%
G(v)=\mathbb{E}\left[  e^{-sV}\right]  ,\qquad s\geq0.
\]
Archimedean copulas that are generated from LS transforms of different
distributions are referred as LT-Archimedean copulas, as formally
defined below:
\begin{definition}
\label{def1}An LT-Archimedean copula is a copula of the form \eqref{arch},
where $\phi^{-1}$ is the Laplace-Stieltjes transform of a distribution
function $G$ on $[0,\infty]$ such that $G(0)=0$.
\end{definition}

For many popular Archimedean copulas, the random variable $V$ has a known
distribution. For example, $V$ is Gamma distributed for Clayton copulas, while
 $V$ is a one-sided stable random variable for Gumbel copulas. A detailed
specification on $V$ can be found in Table 1 of
\cite{hofert2008sampling}.

The following result provides a stochastic representation of $\mathbf{U}%
=(U_{1},\ldots,U_{n})$ where $\mathbf{U}$ follows an LT-Archimedean copula of the form:
\begin{equation} 
\mathbf{U}=\left(  \phi^{-1}\left(  \frac{R_{1}}{V}\right)  ,\ldots,\phi
^{-1}\left(  \frac{R_{n}}{V}\right)  \right) . \label{representlt}%
\end{equation}
Here $V$ is a positive random variable with LS transform $\phi^{-1}$ and
$\{R_{1},\ldots,R_{n}\}$ is a sequence of independent and identically
distributed (i.i.d.) standard exponential random variables independent of $V$.
This representation is first recognized by \cite{marshall1988families} and
later is formally proved in Proposition 7.51 of \cite{mcneil2015quantitative}.

The construction \eqref{representlt} is especially useful in the field of credit risk. To
see this, let us consider the threshold model defined in \eqref{m1} and that
$\mathbf{U}$ has an LT-Archimedean copula with generator $\phi$ as defined in \eqref{representlt}. Then the
random variable $V$ can be considered as a proxy for systematic risks.
Conditioning on $V$, random variables $U_{1},\ldots,U_{n}$ are independent
with conditional distribution function $\mathbb{P}(U_{i}\leq u|V=v)=\exp
(-v\phi(u))=p_{i}(v)$ for some predetermined $u\in \lbrack0,1]$. {By such a
construction, the threshold model \eqref{m1} can be represented succinctly as
a one-factor Bernoulli mixture model with mixing variable $V$ and mixing
probabilities $p_{i}(v),i=1,\ldots,n$. This property offers two important aspects. 
First is that it facilitates the asymptotic analysis of large portfolio loss. By understanding it as a one-factor
Bernoulli mixture model, we will show in Section \ref{saa} that the large portfolio loss is essentially determined by
the mixing distribution of $V$ or its LS transform $\phi^{-1}$. This is also
observed in \cite{gordy2003risk} that asymptotic analysis provides a simple
yet quite accurate way of evaluating large portfolio loss. In the current
paper, we push one step further by deriving the sharp asymptotics
for the large portfolio loss in a more explicit way and under the
Archimedean copula model. Second,  the Bernoulli mixture models lend themselves to
practical implementation of MC simulations \citep[see][]{mcneil2015quantitative, BHS2018}. To be more specific,  a Bernoulli mixture model can be simulated by first generating a realization $v$ of $V$ and then conducting independent Bernoulli experiments with conditional default
probabilities $p_{i}(v)$. This generation algorithm is explicitly exploited in Section \ref{siss} as the starting point of our proposed importance sampling simulations.}

\subsection{Regular variation}

Regular variation is an important notion in our modeling. Intuitively, a
function $f$ is regularly varying at infinity if it behaves like a power law
function near infinity. Interested readers may refer to
\cite{bingham1989regular} and \cite{resnick2013extreme} for textbook
treatments. In our model, we assume the generator function $\phi$ of the
LT-Archimedean copula is regularly varying in order to capture the upper tail dependence.

\begin{definition}
A positive Lebesgue measurable function $f$ on $(0,\infty)$ is said to be
regularly varying at $\infty$ with index $\alpha \in \mathbb{R}$, written as
$f\in \mathrm{RV}_{\alpha}$, if for $x>0$,
\[
\lim_{t\rightarrow \infty}\frac{f(tx)}{f(t)}=x^{\alpha}.
\]

\end{definition}

Similarly, $f$ is said to be regularly varying at $0$ if $f(\frac{1}{\cdot
})\in \mathrm{RV}_{\alpha}$ and $f$ to be regularly varying at $a>0$ if
$f(a-\frac{1}{\cdot})\in \mathrm{RV}_{\alpha}$.

It turns out that many LT-Archimedean copulas which are commonly used in practice have generators that are regularly varying at $1$. For example,
Gumbel copula has a generator of $\phi(t)=(-\ln(t))^{\alpha}$ for $\alpha
\in \lbrack1,\infty)$, and it follows that $\phi^{-1}$ is completely monotonic
and $\phi(1-\frac{1}{\cdot})\in \mathrm{RV}_{-\alpha}$.

\section{Asymptotic Analysis for Large Portfolio Loss\label{saa}}

In this section, we conduct an asymptotic analysis on a regime where the
number of obligors is large with each individual obligor having an excellent credit
rating (i.e. with small default probability). Our asymptotic analysis  focuses on large portfolio losses with Subsection~\ref{sec:tail}   analyzes the tail probabilities of the losses and   Subsection~\ref{sec:ES} tackles the
expected shortfall of the losses.

\subsection{Asymptotics for probabilities of large portfolio loss} \label{sec:tail}

{By considering the portfolio loss model \eqref{m1}, this subsection  analyzes the asymptotic probability  $\mathbb{P}(L_{n}>nb)$ as $n\rightarrow \infty$,
where $b$ is an arbitrarily fixed number.  We restrict our analysis
to LT-Archimedean copulas for modeling the dependence among the latent variables
in order to fully take advantage of the Bernoulli mixture structure (as
explained in Section \ref{sslt}). Recall the random variable $V$ in the representation
(\ref{representlt}) can be interpreted as the systematic risk  or
 common shock factor. The dependence of obligors is mainly induced by $V$. Note
that $\phi^{-1}$ is a decreasing function. When $V$ takes on large values, all
$U_{i}$'s are likely to be large (close to $1$), which leads to many obligors
default simultaneously. To incorporate strong dependence among obligors, we
assume $V$ to be heavy tailed. In particular, we assume $\overline{F}_{V}%
\in \mathrm{RV}_{-1/\alpha}$, with $\overline{F}_{V}(\cdot)$ corresponds to the survival
distribution function of $V$ and $\alpha>1$. Here $\alpha$ represents the
heavy tailedness of $V$ so that the larger $\alpha$ is, the heavier tail $V$ has
and the more dependent the obligors are, which means simultaneous defaults are
more likely to occur. By  Karamata's Tauberian theorem, this is equivalent to
assuming that $\phi(1-\frac{1}{\cdot})\in \mathrm{RV}_{-\alpha}$ with $\alpha>1$,
where by the convexity of the generator $\phi$ the condition $\alpha>1$
necessarily holds. Such heavy tailed assumption on the systematic risk factor
(or common shock) can be seen in \cite{bassamboo2008portfolio},
\cite{chan2010efficient} and \cite{tang2019sharp}. This is formalized in the
following assumption.}

\begin{assumption}
\label{A2}Assume $\mathbf{U}=(U_{1},\ldots,U_{n})$ follows an LT-Archimedean
copula with generator $\phi$ satisfying that $\phi(1-\frac{1}{\cdot}%
)\in \mathrm{RV}_{-\alpha}$ with $\alpha>1$. Let $V$ be the random variable
associated with Laplace-Stieltjes transform $\phi^{-1}$. Assume that $V$ has a
eventually monotone density function.
\end{assumption}

Before presenting the main result of this section, it is useful to note that
by conditioning on $V=\dfrac{v}{\phi(1-f_{n})}$, we have
\begin{align}
p(v,i) &  :=\mathbb{P}\left(  U_{i}>1-l_{i}f_{n}\Big \vert V=\frac{v}%
{\phi(1-f_{n})}\right)  \nonumber \\
&  =1-\exp \left(  -v\frac{\phi(1-l_{i}f_{n})}{\phi(1-f_{n})}\right)
.\label{p1}%
\end{align}
Under Assumption \ref{A2} that $\phi(1-\frac{1}{\cdot})\in \mathrm{RV}%
_{-\alpha}$, we immediately obtain
\[
\lim_{n\rightarrow \infty}p(v,i)=1-\exp \left(  -vl_{i}^{\alpha}\right)
:=\tilde{p}(v,i).
\]
With the condition $V=\dfrac{v}{\phi(1-f_{n})}$, and by the Kolmogorov's strong
law of large numbers, it follows that, almost surely
\begin{equation}
\frac{L_{n}|V=\frac{v}{\phi(1-f_{n})}}{n}\rightarrow r(v):=\sum_{j\leq
|\mathcal{W}|}c_{j}w_{j}\tilde{p}(v,j),\qquad \text{as }n\rightarrow
\infty.\label{r1}%
\end{equation}
Recall $w_j$ is formally defined in Assumption \ref{ass1}. 
Note that $r(v)$ is strictly increasing in $v$ and attains its upper bound
$\bar{c}=\sum_{j\leq|\mathcal{W}|}c_{j}w_{j}$ at infinity, where $\bar{c}$ can
be interpreted as the limiting average loss when all obligors default. Thus,
for each $b\in(0,\bar{c})$, we denote $v^{\ast}$ as the unique solution to
\begin{equation}
r(v)=b.\label{eq}%
\end{equation}
Essentially, $v^{\ast}$ represents the threshold value so that for
$V\in(0,v^{\ast}/\phi(1-f_{n}))$, the limiting average portfolio loss $\bar
{c}$ is less than $b$; for $V\in(v^{\ast}/\phi(1-f_{n}),\infty)$, the limiting
average portfolio loss $\bar{c}$ is greater than $b$.

Now we are ready to present the main theorem of this section which gives a
sharp asymptotic for the probability of large portfolio losses. The proof is
relegated to Appendix.

\begin{theorem}
\label{th4.1}Consider the portfolio loss defined in \eqref{m1}. Under
Assumptions \ref{ass1} and \ref{A2} and further assume that $\exp(-n\beta
)=o(f_{n})$ for any $\beta>0$. Then for any fixed $b\in(0,\bar{c})$, as
$n\rightarrow \infty$,%
\begin{equation}
\mathbb{P}(L_{n}>nb)\sim f_{n}\frac{(v^{\ast})^{-1/\alpha}}{\Gamma
(1-1/\alpha)}, \label{sa1}%
\end{equation}
where $v^{\ast}$ is the unique solution that solves (\ref{eq}).
\end{theorem}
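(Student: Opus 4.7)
The plan is to exploit the one-factor Bernoulli mixture representation from Section~\ref{sslt}: conditional on $V$, the indicators $\mathbf{1}_{\{U_i>1-l_i f_n\}}$ are independent Bernoullis with means $1-\exp(-V\phi(1-l_i f_n))$, so $L_n$ decomposes into $|\mathcal{W}|$ independent weighted binomial sums, one per sub-portfolio from Assumption~\ref{ass1}. I would condition on $V$, substitute $w = V\phi(1-f_n)$ (so that, from (\ref{p1}) and (\ref{r1}), the conditional default probabilities converge to $\tilde p(w,i)$ and $L_n/n\to r(w)$ almost surely by Kolmogorov's strong law), and write
\begin{equation*}
\mathbb{P}(L_n>nb) = I_1 + I_2,\qquad I_k := \int_{A_k} \mathbb{P}\bigl(L_n>nb \mid V=w/\phi(1-f_n)\bigr)\,dF_V\bigl(w/\phi(1-f_n)\bigr),
\end{equation*}
with $A_1=(0,v^*)$ and $A_2=[v^*,\infty)$.

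For $I_1$, on $\{w<v^*\}$ the conditional mean of $L_n/n$ is asymptotically $r(w)<b$, so Hoeffding's inequality gives an exponential conditional bound $\exp(-n\beta(w))$ with $\beta(w)>0$ uniformly on compact sub-intervals of $[0,v^*)$; combined with the standing assumption $\exp(-n\beta)=o(f_n)$ for every $\beta>0$, together with a short treatment of the transition layer $w\in(v^*-\eta_n,v^*)$ of CLT-width $O(1/\sqrt n)$ whose $V$-measure I would control via the eventually monotone density from Assumption~\ref{A2}, one obtains $I_1=o(f_n)$. For $I_2$, the conditional probability converges to $1$ for every $w>v^*$, so I sandwich
\begin{equation*}
\overline{F}_V\bigl((v^*+\epsilon)/\phi(1-f_n)\bigr)(1-o(1)) \leq I_2 \leq \overline{F}_V\bigl(v^*/\phi(1-f_n)\bigr),
\end{equation*}
and, letting $\epsilon\downarrow 0$ after invoking the regular variation of $\overline{F}_V$ (plus a negligible truncation at some $M\uparrow\infty$ to handle the far upper tail), conclude $I_2\sim\overline{F}_V\bigl(v^*/\phi(1-f_n)\bigr)$.

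The final step is the Abelian/Tauberian calculation. Since $\phi(1-1/\cdot)\in\mathrm{RV}_{-\alpha}$ with $\alpha>1$, one has $\overline{F}_V\in\mathrm{RV}_{-1/\alpha}$ and $1-\phi^{-1}\in\mathrm{RV}_{1/\alpha}$ at $0$, and Karamata-Tauberian gives
\begin{equation*}
\overline{F}_V(x) \sim \frac{1-\phi^{-1}(1/x)}{\Gamma(1-1/\alpha)} \qquad\text{as } x\to\infty.
\end{equation*}
Applying this at $x=v^*/\phi(1-f_n)\to\infty$, then peeling off the factor $(v^*)^{-1/\alpha}$ via the regular variation of $1-\phi^{-1}$ at $0$, and finally using the functional-inverse identity $1-\phi^{-1}(\phi(1-f_n))=1-(1-f_n)=f_n$, collapses the expression to the claimed $f_n(v^*)^{-1/\alpha}/\Gamma(1-1/\alpha)$. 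I expect the principal obstacle to be the transition sliver in $I_1$: within $O(1/\sqrt n)$ of $v^*$ the strong law alone is insufficient, and one has to pair a CLT-type bound on the conditional probability with a regular-variation estimate on the density of $V$ (through Assumption~\ref{A2} and the monotone density theorem) to verify that this sliver contributes only $o(f_n)$; a secondary concern is ensuring the very-large-$w$ tail of $I_2$, where $r(w)\to\bar c$, is genuinely dominated by $\overline{F}_V(v^*/\phi(1-f_n))$ rather than producing an extra term.
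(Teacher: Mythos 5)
Your overall architecture is the same as the paper's: condition on $V$, split the probability at the critical value of $V$, kill the lower piece with Hoeffding's inequality plus the hypothesis $\exp(-n\beta)=o(f_n)$, identify the upper piece with $\overline{F}_V(v^\ast/\phi(1-f_n))$, and finish with Karamata's Tauberian theorem and the identity $1-\phi^{-1}(\phi(1-f_n))=f_n$. The Tauberian endgame and the $M$-truncation for the far upper tail are exactly what the paper does, and your ``secondary concern'' is a non-issue since $\overline{F}_V(M/\phi(1-f_n))\sim f_n M^{-1/\alpha}/\Gamma(1-1/\alpha)$ vanishes relative to $f_n$ as $M\to\infty$.

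The one genuine gap is precisely the transition layer you flag, and your proposed fix does not close it as stated. The paper avoids the layer entirely by never cutting at $v^\ast$: for the upper bound it cuts at $v_\delta^\ast$ solving $r(v)=b-\delta$, so that on the whole lower region $r(v)\le b-\delta$ is uniformly bounded away from $b$ and Lemma~\ref{leld} gives a single uniform bound $\exp(-n\beta)$ with $\beta$ depending only on $\delta$; for the lower bound it cuts at $v_{\hat\delta}^\ast$ solving $r(v)=b+\delta$; and $\delta\downarrow 0$ is taken only at the very end, through the continuous map $v\mapsto v^{-1/\alpha}$. By contrast, if you cut exactly at $v^\ast$ and excise a sliver of width $\eta_n=O(1/\sqrt n)$, then just outside the sliver the deviation $b-r(w)$ is itself $O(1/\sqrt n)$ and Hoeffding's exponent $2n(b-r(w))^2/\max_j c_j^2$ is merely $O(1)$ --- the bound is a constant, not $o(f_n)$. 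To rescue your route you would have to integrate $\exp(-cn(v^\ast-w)^2)$ against the density of $V\phi(1-f_n)$ over the whole band below $v^\ast$, invoking the monotone density theorem and the uniform convergence theorem for regularly varying functions to get a local bound $\approx f_n\cdot O(1/\sqrt n)=o(f_n)$; this works but is considerably more delicate than the paper's $\delta$-perturbation, and note that the stated hypothesis only gives $\exp(-\beta n)=o(f_n)$ for constant $\beta$, so any bound whose exponent grows slower than linearly in $n$ (e.g.\ $\exp(-cn^{2\epsilon})$ from a sliver of width $n^{-1/2+\epsilon}$) cannot be discarded by that hypothesis alone. I would recommend adopting the paper's two-sided $\delta$-cut, which makes the transition layer disappear.
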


{\begin{remark}
We emphasize that the asymptotic behavior of the portfolio loss is mostly dictated by $\alpha$
and $f_{n}$. Recall $\alpha$ is the index of regular variation of the
generator function $\phi$. It controls dependence among the latent variables:
the larger $\alpha$, the more likely that obligors tend to default
simultaneously. Once $\alpha$ is fixed, Theorem \ref{th4.1} shows that the
probability of large portfolio loss diminishes to zero at the same rate as
$f_{n}$. This result is sharp and is more explicit, compared to the results in
\cite{gordy2003risk}. As explained in Subsection \ref{sslt}, the threshold model \eqref{m1} under the Archimedean copula is reduced to a Bernoulli mixture model while \cite{gordy2003risk} studies a risk-factor model, which is essentially a Bernoulli mixture model.  In that paper, the author shows that
the capital of the fine-grained portfolio asymptotically converges to the
capital of the systematic risk factor. In other words, the asymptotics are presented between the
portfolio loss and the systematic risk factor (equivalent to the random
variable $V$ here). Moreover, through numerical experiments, the author shows the approximation of the
large portfolio loss by the systemic risk factor is quite accurate and simple.
In our current paper, we are able to go one step further that the large
portfolio loss is given in a more explicit form that it is linearly proportional
to the individual default probabilities $f_{n}$. Thus, to
approximate the large portfolio loss, we do not rely on the full information
of the systematic risk factor. At the same time, the asymptotics derived in
our paper should share the same advantages of accuracy and simplicity as discussed in
\cite{gordy2003risk}.
\end{remark}

{Now we discuss the assumption on $f_n$ in greater details. As mentioned earlier, due to the effect of diversification, individual default
probability diminishes in a large portfolio as $n$ increases. On the technical
side, letting $f_{n}$ converge to $0$ ensures that a large portfolio loss
occurs primarily when $V$ takes large values, whereas $R_{i}$, $i=1,\ldots,n$,
generally does not play a role in the occurrence of the large portfolio loss.
To better understand this requirement, we consider the case with $f_{n}\equiv
f$ being a constant. Then similar calculations as in (\ref{p1}) leads to
\[
p_{0}(v,i):=\mathbb{P}\left(  U_{i}>1-l_{i}f|V=v\right)  =1-\exp
(-v\phi(1-l_{i}f)).
\]
Note that $p_{0}(v,i)$ is strictly increasing in $v$. Under the condition
$V=v$, and the Kolmogorov's strong law of large numbers, we have, almost
surely,
\[
\frac{L_{n}|V=v}{n}\rightarrow r_{0}(v):=\sum_{j\leq|\mathcal{W}|}c_{j}%
w_{j}p_{0}(v,j),\qquad \text{as }n\rightarrow \infty,
\]
where the limit follows from Assumption \ref{ass1}. Clearly, $r_{0}(v)$ is
also strictly increasing in $v$. Define $v_{0}^{\ast}$ as the unique solution
to
\[
r_{0}(v)=b.
\]
It then follows for portfolio size $n$ large enough, we have $\mathbb{P}%
(L_{n}>nb|V=v)=0$ for $v\leq v_{0}^{\ast}$; and $\mathbb{P}(L_{n}>nb|V=v)=1$
for $v>v_{0}^{\ast}$. Thus, for any $b\in(0,\bar{c})$, and large enough $n$,
we have
\[
\mathbb{P}(L_{n}>nb)=\mathbb{E}[\mathbb{P}(L_{n}>nb|V)]=\overline{F}_{V}%
(v_{0}^{\ast}).
\]
This leads to a mathematically trivial result. Moreover, it is
counter-intuitive in the sense that as the size of the portfolio increases,
the probability of portfolio loss is still significant (i.e. not converging to
$0$ {which is in contradiction with} the portfolio diversification). {This
illustration exemplified the importance of the assumption that $f_{n}$
diminishes to $0$ as $n\rightarrow \infty$ to account for the rarity of large
loss. The assumption $\exp(-n\beta)=o(f_{n})$ essentially requires that the
decay rate of $f_{n}$ to $0$ needs to be slower than an exponential function.
By choosing different $f_{n}$, portfolios will have different credit rating
classes. For example, if $f_{n}$ decays at a faster rate such as $1/n$, then
the portfolio has higher quality obligors, whereas if $f_{n}$ decays at a
slower rate of $1/\ln n$, then the portfolio consists of more risky obligors.
There are also many similar discussions on the requirement of individual
default probability diminishes in a large portfolio (equivalent to our
$f_{n}\rightarrow0$) in the literature, which are all rooted in the effect of
diversification of a large portfolio. For example, in \cite{gordy2003risk}, the assumption (A-2)
}guarantees that the share of the largest single exposure in total portfolio
exposure vanishes to zero as the number of exposures in the portfolio
increases. \cite{tang2019sharp} explained it more explicitly in their Section 2 as follows. As
the size of the portfolio increases,  each latent variable $X_{i}$ should be
modified to $\frac{X_{i}}{\iota_{i}g_{n}}$, where $g_{n}$ is a positive
function diverging to $\infty$ to reflect an overall improvement on the credit
quality, and $\iota_{i}$ is a positive random variable to reflect a minor
variation in portfolio effect on obligor $i$. With the endogenously determined
default threshold of obligor $i$ is fixed as $a_{i}>0$, the individual default
occurs as $\frac{X_{i}}{\iota_{i}g_{n}}>a_{i}$ if and only if $X_{i}>\iota
_{i}a_{i}g_{n}$, which is equivalent to have $U_{i}<1-l_{i}f_{n}$ with $f_{n}$
decreasing to $0$ in our context.} \bigskip

Next we use an example involving a fully homogeneous portfolio to further
illustrate our results.

\begin{example}
\label{ex4.1}\normalfont Assume a fully homogeneous portfolio, that is
$l_{i}\equiv l,c_{i}\equiv c$. Under this assumption, \eqref{r1} simplifies
to
\[
r(v)=c\left(  1-\exp \left(  -vl^{\alpha}\right)  \right)  .
\]
Thus, $v^{\ast}=l^{-\alpha}\ln \dfrac{c}{c-b}$ is the unique solution to
$r(v)=b$. It immediately follows from relation \eqref{sa1} that, for
$b\in(0,c)$, we have
\begin{equation}
\mathbb{P}(L_{n}>nb)\sim lf_{n}\frac{\left(  \ln \frac{c}{c-b}\right)
^{-1/\alpha}}{\Gamma(1-1/\alpha)}. \label{example1}%
\end{equation}
Direct calculation further shows that the right-hand side of \eqref{example1}
is an increasing function of $\alpha$ if $\ln \frac{c}{c-b}\geq \exp(-\gamma)$,
i.e., $b/c\geq1-e^{-e^{-\gamma}}$, where $\gamma$ denotes the Euler's
constant. This monotonic result can be interpreted in an intuitive way. Recall
that $\alpha$ is the index of regular variation of the generator function
$\phi$. A larger $\alpha$ corresponds to a stronger upper tail dependence, and
therefore a joint default of obligors is more likely to occur. However, the
monotonicity fails if $b$ is not large. In this case, the mean portfolio loss
($L_{n}/n$) is compared to a lower level $b$ and such event may occur due to a
single obligor default. Thus, both the upper tail dependence and the level of
mean portfolio loss affect probability of large portfolio loss.
\end{example}

\subsection{Asymptotics for expected shortfall of large portfolio loss}\label{sec:ES}

The asymptotic expansions on the tail probabilities of the large portfolio
loss provide the foundation in the analysis of the expected shortfall. To see
this, the expected shortfall can be rewritten as
\begin{equation}
\mathbb{E}\left[  L_{n}|L_{n}>nb\right]  =nb+n\frac{\int_{b}^{\infty
}\mathbb{P}\left(  L_{n}>nx\right)  \mathrm{d}x}{\mathbb{P}\left(
L_{n}>nb\right)  }. \label{essurvival}%
\end{equation}
Theorem \ref{th4.1} becomes the key to establishing an asymptotic for the
expected shortfall, as formally stated in the following theorem.
\begin{theorem}
\label{th4.2}Under the same assumption as in Theorem \ref{th4.1}, the
following relation
\begin{equation}
\mathbb{E}\left[  L_{n}|L_{n}>nb\right]  \sim n\psi(\alpha,b) \label{essa}%
\end{equation}
holds for any fixed $b\in(0,\bar{c})$, where
\[
\psi(\alpha,b):=b+\frac{\int_{v^{\ast}}^{\infty}r^{\prime}(v)v^{-1/\alpha
}\mathrm{d}v}{(v^{\ast})^{-1/\alpha}}.
\]

\end{theorem}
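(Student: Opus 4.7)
The natural starting point is identity \eqref{essurvival}, which reduces the task to establishing
$$
\frac{\int_{b}^{\infty}\mathbb{P}(L_{n}>nx)\,\mathrm{d}x}{\mathbb{P}(L_{n}>nb)} \longrightarrow \frac{\int_{v^{\ast}}^{\infty} r'(v)\, v^{-1/\alpha}\,\mathrm{d}v}{(v^{\ast})^{-1/\alpha}}.
$$
The denominator is handled immediately by Theorem \ref{th4.1}, since $\mathbb{P}(L_{n}>nb) \sim f_{n}(v^{\ast})^{-1/\alpha}/\Gamma(1-1/\alpha)$. The whole problem therefore collapses to showing that $f_{n}^{-1}\int_{b}^{\infty}\mathbb{P}(L_{n}>nx)\,\mathrm{d}x$ converges to the expected limit.

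My plan is to apply Theorem \ref{th4.1} pointwise in $x$ under the integral. For each fixed $x\in(0,\bar{c})$, Theorem \ref{th4.1} yields $f_{n}^{-1}\mathbb{P}(L_{n}>nx) \to (v^{\ast}(x))^{-1/\alpha}/\Gamma(1-1/\alpha)$, where $v^{\ast}(x)$ denotes the unique solution of $r(v)=x$. Now, because $(c_{i},l_{i})$ ranges over the finite set $\mathcal{W}$ by Assumption \ref{ass1}, $L_{n}\le n c_{\max}$ with $c_{\max}=\max\{c:(c,l)\in\mathcal{W}\}$, so $\mathbb{P}(L_{n}>nx)=0$ for every $x>c_{\max}$; this truncates the integral to $[b,c_{\max}]$. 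To interchange limit and integral via dominated convergence I would use the uniform bound
$$
\frac{\mathbb{P}(L_{n}>nx)}{f_{n}} \le \frac{\mathbb{P}(L_{n}>nb)}{f_{n}} \le M, \qquad x\in[b,c_{\max}],
$$
for all $n$ sufficiently large, where $M$ is obtained from Theorem \ref{th4.1} itself. The dominating function $M\cdot\mathbf{1}_{[b,c_{\max}]}$ is integrable, yielding
$$
\frac{1}{f_{n}}\int_{b}^{\infty}\mathbb{P}(L_{n}>nx)\,\mathrm{d}x \longrightarrow \frac{1}{\Gamma(1-1/\alpha)}\int_{b}^{\bar{c}} (v^{\ast}(x))^{-1/\alpha}\,\mathrm{d}x.
$$

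A change of variables $x=r(v)$, with $\mathrm{d}x=r'(v)\,\mathrm{d}v$, $v^{\ast}(b)=v^{\ast}$, and $v^{\ast}(x)\to\infty$ as $x\to\bar{c}$, converts the right-hand side into $\int_{v^{\ast}}^{\infty}r'(v)\,v^{-1/\alpha}\,\mathrm{d}v / \Gamma(1-1/\alpha)$. Combining with the denominator asymptotic gives the ratio claimed above, and plugging into \eqref{essurvival} produces \eqref{essa}. Finiteness of the limiting integral is not problematic since $(v^{\ast}(x))^{-1/\alpha}$ is bounded on $[b,\bar{c}]$ (it tends to $0$ as $x\uparrow\bar{c}$).

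The main obstacle is the dominated-convergence step: one must ensure that the pointwise asymptotic $\mathbb{P}(L_{n}>nx)\sim f_{n}(v^{\ast}(x))^{-1/\alpha}/\Gamma(1-1/\alpha)$ can be integrated against $\mathrm{d}x$. The deterministic bound $L_{n}\le nc_{\max}$ supplies the compact effective domain, and the monotonicity $\mathbb{P}(L_{n}>nx)\le\mathbb{P}(L_{n}>nb)$ supplies the pointwise majorant; together with Theorem \ref{th4.1} this delivers the uniform-in-$n$ bound needed. If any subtlety arises near $x=\bar{c}$ (where the limit vanishes but the pre-limit may not), it can be handled by splitting the integral at $\bar{c}-\varepsilon$, applying dominated convergence on $[b,\bar{c}-\varepsilon]$, and using the crude bound $\mathbb{P}(L_{n}>nx)/f_{n}\le M$ together with $\varepsilon\downarrow 0$ on $[\bar{c}-\varepsilon,c_{\max}]$.
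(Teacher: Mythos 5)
Your overall route is the same as the paper's: start from \eqref{essurvival}, handle the denominator with Theorem \ref{th4.1}, prove $f_n^{-1}\int_b^\infty \mathbb{P}(L_n>nx)\,\mathrm{d}x \to \Gamma(1-1/\alpha)^{-1}\int_{v^\ast}^\infty r'(v)v^{-1/\alpha}\,\mathrm{d}v$ by dominated convergence with the majorant $f_n^{-1}\mathbb{P}(L_n>nb)$, and finish with the substitution $x=r(v)$. On $[b,\bar c]$ this is exactly the paper's Step 3 and is fine.

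The gap is the region $x\in[\bar c, c_{\max}]$. In a heterogeneous portfolio $\bar c=\sum_j c_j w_j$ is a weighted average, so generically $c_{\max}>\bar c$ and this region has positive length. Your primary argument silently replaces $\int_b^{c_{\max}}$ by $\int_b^{\bar c}$ in the limit, which requires the pointwise limit $f_n^{-1}\mathbb{P}(L_n>nx)\to 0$ for $x\ge\bar c$; Theorem \ref{th4.1} does not assert this, since it is stated only for levels in $(0,\bar c)$ (the equation $r(v)=x$ has no solution for $x\ge\bar c$). Your proposed fallback does not repair this: bounding $f_n^{-1}\mathbb{P}(L_n>nx)$ by $M$ on $[\bar c-\varepsilon, c_{\max}]$ leaves a contribution of order $M(c_{\max}-\bar c+\varepsilon)$, which does \emph{not} vanish as $\varepsilon\downarrow 0$. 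This is precisely where the paper spends two of its three steps: it proves $\mathbb{P}(L_n>n\bar c)=o(f_n)$ by splitting on $\{V\le r^{\leftarrow}(\bar c-\delta)/\phi(1-f_n)\}$, invoking the Hoeffding concentration bound (Lemma \ref{leld}) together with the assumption $\exp(-n\beta)=o(f_n)$ for the first piece and $\overline F_V\in\mathrm{RV}_{-1/\alpha}$ with $\delta\downarrow 0$ (so that $r^{\leftarrow}(\bar c-\delta)\to\infty$) for the second, and then deduces $J_2=\mathbb{E}[(L_n/n-\bar c)_+ 1_{\{L_n>n\bar c\}}]\le(c_{\max}-\bar c)\,\mathbb{P}(L_n>n\bar c)=o(f_n)$. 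A cheaper patch consistent with your style is available: by monotonicity, for any $b'<\bar c$ and $x\ge\bar c$ one has $\limsup_n f_n^{-1}\mathbb{P}(L_n>nx)\le (v^\ast(b'))^{-1/\alpha}/\Gamma(1-1/\alpha)$, and letting $b'\uparrow\bar c$ forces this to $0$ since $v^\ast(b')\to\infty$; but some such argument must be supplied, and your write-up as it stands lacks it.
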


The above theorem states that the expected shortfall grows almost linearly
with the size of the portfolio $n$.

\section{Importance Sampling (IS) Simulations for Large Portfolio Loss\label{siss}}

The asymptotic results established in the last section (see Theorems
\ref{th4.1} and \ref{th4.2}) characterize the behavior of large portfolio
losses. These results, however, may not be applicable in practical
applications unless the size of portfolio is large. In practice, the tail
probability or the expected shortfall of the portfolio loss are typically
estimated via MC simulation methods due to the non-tractability. Naive
application of MC method to this type of rare-event problems, on the other
hand, is notoriously inefficient. For this reason, variance reduction methods
are often used to enhance the underlying MC methods. We similarly follow this
line of inquiry and propose two variance reduction algorithms. In particular,
an IS algorithm based on a hazard rate twisting is
presented in this section while a second algorithm based on the conditional
Monte Carlo simulations will be discussed in the next section. The asymptotic
analysis in Section \ref{saa} plays an important role in proving the
efficiency of both algorithms.

\subsection{Preliminary of importance sampling}

We are interested in estimating $\mathbb{P}\left(  L_{n}>nb\right)  $, where
$L_{n}$ can be considered as a linear combination of conditionally independent
Bernoulli random variables $\{1_{\{U_{i}>1-l_{i}f_{n}\}},i=1,\ldots,n\}$. For
each Bernoulli variable, the associated probability is denoted by $p_{j}$ for
$j\leq|\mathcal{W}|$, which is a function of the generated variable $V$.
Following the analysis in Section \ref{saa}, $p_{j}$ is explicitly
given as $p(v,j)$ as shown in \eqref{p1}. The simulation of $\mathbb{P}\left(
L_{n}>nb\right)  $ is then conducted in two steps. In step 1, the common
factor $V$ using the density function $f_{V}(\cdot)$ is simulated and in step
2, the corresponding Bernoulli random variables are generated. When the
portfolio size is very large, the event $\{L_{n}>nb\}$ only occurs when $V$
takes large values and it further leads to the default probability $p_{j}$ for
each Bernoulli variable is small. Thus, both steps in the simulation of
$\mathbb{P}\left(  L_{n}>nb\right)  $ are rare event simulations. Estimation
by naive MC simulation becomes impractical due to the large number of samples
needed, and therefore, one has to resort to variance reduction techniques. IS
is a widely used variance reduction technique by placing greater probability
mass on the rare event of interest and then appropriately normalizing the
resulting output. Next, we briefly discuss how we apply IS in the two-step simulation.

In the first step, the tail behavior of large portfolio loss highly depends on
the tail distribution of $V$, i.e., the key to the occurrence of the large
loss event corresponds to $V$ taking large value. Then a good importance
sampling distribution for random variable $V$ should be more heavy-tailed than
its original distribution, so that a larger probability could be assigned to
the event that the average portfolio loss conditioned on $V$ exceeds the level
$b$. Such importance sampling distribution can be obtained via hazard rate
twisting on $V$. Let $\tilde{f}_{V}(\cdot)$ denote the importance sampling
density function for $V$ after the application of IS. In the second step, we improve the
efficiency of calculating the conditional probabilities by replacing each
Bernoulli success probability $p_{j}$ by some probability $\tilde{p}_{j}$, for
$j\leq|\mathcal{W}|$. In this case, exponentially twisting is a fairly
well-established approach for Bernoulli random variables; see, e.g.,
\cite{glasserman2005importance}. Let $\tilde{\mathbb{P}}$ denote the
corresponding IS probability measure and $\tilde{\mathbb{E}}$ be the
expectation under the measure $\tilde{\mathbb{P}}$. Then the following
identity holds:
\begin{equation} \label{eq:IS}
\mathbb{P}\left(  L_{n}>nb\right)  =\mathbb{E}\left[  1_{\{L_{n}>nb\}}\right]
=\tilde{\mathbb{E}}\left[  1_{\{L_{n}>nb\}}\tilde{L}\right]  ,
\end{equation}
where $\tilde{L}=\dfrac{d\mathbb{P}}{d\tilde{\mathbb{P}}}$ is the
Radon-Nikodym derivative of $\mathbb{P}$ with respect to $\tilde{\mathbb{P}}$
and equals
\[
\frac{f_{V}(V)}{\tilde{f}_{V}(V)}\prod_{j\leq|\mathcal{W}|}\left(  \frac
{p_{j}}{\tilde{p}_{j}}\right)  ^{n_{j}Y_{j}}\left(  \frac{1-p_{j}}{1-\tilde
{p}_{j}}\right)  ^{n_{j}(1-Y_{j})}.
\]
In the above expression,  $Y_{j}=1_{\{U_{j}>1-l_{j}f_{n}\}}$ and $n_{j}Y_{j}$ denotes the number
of defaults in sub-portfolio $j$. We refer to $\tilde{L}$ as the unbiasing
likelihood ratio.
The key finding from \eqref{eq:IS} is that calculating the tail probability $\mathbb{P}\left(  L_{n}>nb\right)$ is equivalent to evaluating either expectation $\mathbb{E}\left[  1_{\{L_{n}>nb\}}\right]$ or $\tilde{\mathbb{E}}\left[  1_{\{L_{n}>nb\}}\tilde{L}\right]$. We refer the estimator based on the latter expectation as the IS estimator and its efficiency crucially depends on the choice of the IS density function  $\tilde{f}_{V}(\cdot)$.

We now discuss two measures to characterize the performance of the proposed
IS estimator. 
Asymptotically,  the  good  performance  commonly  observed  in  realistic situations is a bounded relative error \citep[see][]{asmussen2006improved,mcleish2010bounded}. We say a sequence of estimators $(1_{\{L_{n}%
>nb\}}\tilde{L}:n\geq1)$ under probability measure $\tilde{\mathbb{P}}$ has
bounded relative error if
\[
\limsup_{n\rightarrow \infty}\frac{\sqrt{\tilde{\mathbb{E}}\left[
1_{\{L_{n}>nb\}}\tilde{L}^{2}\right]  }}{\mathbb{P}\left(  L_{n}>nb\right)
}<\infty.
\]
A slightly weaker form criterion called asymptotically optimal is also widely
used  \citep[see][]{glasserman2005importance, glasserman2007large,glasserman2008fast} if the following condition holds,
\[
\lim_{n\rightarrow \infty}\frac{\log \tilde{\mathbb{E}}\left[  1_{\{L_{n}%
>nb\}}\tilde{L}^{2}\right]  }{\log \mathbb{P}\left(  L_{n}>nb\right)  }=2.
\]
This condition is equivalent to saying that $\lim \limits_{n\rightarrow \infty
}\tilde{\mathbb{E}}\left[  1_{\{L_{n}>nb\}}\tilde{L}^{2}\right]
/\mathbb{P}\left(  L_{n}>nb\right)  ^{2-\varepsilon}=0$, for every
$\varepsilon>0$. It is readily to check that bounded relative error implies
asymptotically optimality.


\subsection{Two-step importance sampling for tail probabilities}

\subsubsection{First step: twisting $V$}

As a first step in providing our IS algorithm for LT-Archimedean copulas, we
apply IS to the distribution of random variable $V$. In Assumption \ref{A2},
we assume the generator $\phi(1-\frac{1}{\cdot})\in \mathrm{RV}_{-\alpha}$,
where $\phi^{-1}$ is the LS transform of random variable $V$. Then by
Karamata's Tauberian Theorem  \cite[see][pp.442--446]{feller1971introduction} $V$ is actually heavy-tailed with tail index $1/\alpha$. As noted
in \cite{asmussen2000rare}, traditional exponential twisting approach cannot
work directly for distributions with heavy tails, since a finite cumulant
generating function in \eqref{expontial change of measure} does not exist when
a positive twisting parameter is required. So an alternative method must be
used. In this subsection we describe an IS algorithm to assign a larger
probability to the event $\left \{  V>\frac{v^{\ast}}{\phi(1-f_{n})}\right \}  $
by hazard rate twisting the original distribution of $V$; see
\cite{juneja2002simulating} for an introduction on hazard rate twisting. We
prove that this leads to an estimator that is asymptotically optimal.

Let us define the hazard rate function associated to the random variable $V$
as
\[
\mathcal{H}(x)=-\log(\overline{F}_{V}(x)).
\]
By changing $\mathcal{H}(x)$ to $(1-\theta)\mathcal{H}(x)$ for some
$0<\theta<1$, the tail distribution changes to
\begin{equation}
\overline{F}_{V,\theta}(x)=(\overline{F}_{V}(x))^{1-\theta}=\exp
((\theta-1)\mathcal{H}(x)), \label{hztail}%
\end{equation}
and the density function becomes
\begin{equation}
f_{V,\theta}(x)=(1-\theta)(\overline{F}_{V}(x))^{-\theta}f_{V}(x)=(1-\theta
)\exp(\theta \mathcal{H}(x))f_{V}(x). \label{hzpdf}%
\end{equation}
Note that we have imposed an additional subscript $\theta$ on both $\overline{F}_{V,\theta}(x)$ and $\overline{f}_{V,\theta}(x)$ to emphasize that these are the functions that correspond to the transformed variable $(1-\theta)\mathcal{H}(x)$.
The prescribed transformation is similar to exponential twisting, except that the twisting rate is
$\theta \mathcal{H}(x)$ rather than $\theta x$. By \eqref{hztail}, one can also
note that the tail of random variable $V$ becomes heavier after twisting.

The key, then, is finding the best parameter $\theta$. By \eqref{hzpdf}, the
corresponding likelihood ratio $f_{V}(x)/f_{V,\theta}(x)$ is $\frac
{1}{1-\theta}\exp(-\theta \mathcal{H}(x))$, and this is upper bounded by
\begin{equation}
\frac{1}{1-\theta}\exp \left(  -\theta \mathcal{H}\left(  \frac{v^{\ast}}%
{\phi(1-f_{n})}\right)  \right)  \label{ublikelihood}%
\end{equation}
on the set $\left \{  V>\frac{v^{\ast}}{\phi(1-f_{n})}\right \}  $. It is a
standard practice in IS to search for $\tilde{\theta}$ by minimizing the upper
bound on the likelihood ratio, since this also minimizes the upper bound of
the second moment of the estimator $1_{\{L_{n}>nb\}}\frac{f_{V}(V)}%
{f_{V,\theta}^{\ast}(V)}$. By taking the derivative on the upper bound
\eqref{ublikelihood} w.r.t. $\theta$, we obtain
\[
\tilde{\theta}=1-\frac{1}{\mathcal{H}\left(  \frac{v^{\ast}}{\phi(1-f_{n}%
)}\right)  }.
\]
Then, the tail distribution in \eqref{hztail} corresponding to hazard rate
twisting by $\tilde{\theta}$ equals
\begin{equation}
\overline{F}_{V,\tilde{\theta}}(x)=\exp \left(  -\frac{\mathcal{H}%
(x)}{\mathcal{H}\left(  \frac{v^{\ast}}{\phi(1-f_{n})}\right)  }\right)  .
\label{hztheta}%
\end{equation}
Explicit form of \eqref{hztheta} is usually difficult to derive, because the
tail distribution for random variable $V$ is only specified in a
semiparametric way. Alternatively, we can replace the hazard function
$\mathcal{H}(x)$ by $\tilde{\mathcal{H}}(x)$ where $\mathcal{H}(x)\sim
\tilde{\mathcal{H}}(x)$ and $\tilde{\mathcal{H}}(x)$ is available in a closed
form. \cite{juneja2007asymptotics} prove that estimators derived by such
\textquotedblleft asymptotic\textquotedblright \ hazard rate twisting method
can achieve asymptotic optimality.

By Proposition B.1.9(1) of \cite{de2007extreme}, $\overline{F}_{V}%
\in \mathrm{RV}_{-1/\alpha}$ implies $\mathcal{H}(x)\sim \frac{1}{\alpha}%
\log(x)$ as $x\rightarrow \infty$. This, along with \eqref{hztheta}, suggests
that the tail distribution $\overline{F}_{V,\tilde{\theta}}$ should be close
to
\[
\overline{F}_{V,\tilde{\theta}}(x)\approx x^{-1/\left(  \log v^{\ast}-\log
\phi(1-f_{n})\right)  }.
\]
For considerably large $n$, we can even ignore the term $\log(v^{\ast})$ to
achieve further simplification. Hence, the corresponding density function can
be taken as
\[
\frac{1}{-\log \phi(1-f_{n})}x^{\frac{1}{\log \phi(1-f_{n})}-1},
\]
which is a Pareto distribution with shape parameter $-1/\log
\phi(1-f_{n})$. Now we define
\begin{equation}
f_{V}^{\ast}(x)=\left \{
\begin{array}
[c]{lc}%
f_{V}(x), & x<x_{0},\\
\overline{F}_{V}(x_{0})x_{0}^{-1/\log \phi(1-f_{n})}\frac{1}{-\log \phi
(1-f_{n})}x^{\frac{1}{\log \phi(1-f_{n})}-1} & x\geq x_{0},
\end{array}
\right.  \label{fvstar}%
\end{equation}
where $x_{0}$ is chosen to remain the ratio $f_{V}(x)/f_{V}^{\ast}(x)$ upper
bounded by a constant for all $x$. Thus, the tail part of random variable $V$
becomes heavier from twisting, but the probability for small values
remains unchanged.

\begin{remark}
The role of $x_{0}$ is crucial for showing the asymptotic optimality of the
algorithm, which is later seen in the proof of Lemma \ref{thasyoptimal}.
Theoretically, its value relies on the explicit expression of the density
function $f_{V}(x)$. Practically, our numerical results are not sensitive to  $x_{0}$ and hence for ease of implementation, one may fix $x_{0}$ to an
arbitrary constant.
\end{remark}

\subsubsection{Second step: twisting to Bernoulli random variables}

We now proceed to applying exponential twisting to Bernoulli random variables
$\{1_{\{U_{i}>1-l_{i}f_{n}\}},i=1,\ldots,n\}$ conditional on the common factor
$V$. A measure $\tilde{\mathbb{P}}$ is said to be an exponentially twisted
measure of $\mathbb{P}$ by parameter $\theta$, for some random variable $X$,
if
\begin{equation}
\dfrac{d\tilde{\mathbb{P}}}{d\mathbb{P}}=\exp(\theta X-\Lambda_{X}(\theta)),
\label{expontial change of measure}%
\end{equation}
where $\Lambda_{X}(\theta)=\log \mathbb{E}[\exp(\theta X)]$ represents the
cumulant generating function. Suppose random variable $X$ has density function
$f_{X}(x)$, then the exponential twisted density has the form $\exp(\theta
x-\Lambda_{X}(\theta))f_{X}(x)$.

Now we deal with the Bernoulli success probability $p_{j}$, which is
essentially $p(v,j)$ as defined in \eqref{p1} by conditioning on $V=\dfrac
{v}{\phi(1-f_{n})}$. In order to increase the conditional default
probabilities, followed by the idea in \cite{glasserman2005importance}, we
apply an exponential twist by choosing a parameter $\theta$ and taking
\[
p_{j}^{\theta}=\frac{p_{j}e^{\theta c_{j}}}{1+p_{j}\left(  e^{\theta c_{j}%
}-1\right)  },
\]
where $p_{j}^{\theta}$ denotes the $\theta$-twisted probability conditional on
$V=\dfrac{v}{\phi(1-f_{n})}$. Note that $p_{j}^{\theta}$ is a strictly
increasing function in $\theta$ if $\theta>0$. With this new choice of
conditional default probabilities $\left \{  p_{j}^{\theta}:j\leq
|\mathcal{W}|\right \}  $, straightforward calculation shows that the likelihood ratio
conditioning on $V$ simplifies to
\begin{equation}
\prod_{j\leq|\mathcal{W}|}\left(  \frac{p_{j}}{p_{j}^{\theta}}\right)
^{n_{j}Y_{j}}\left(  \frac{1-p_{j}}{1-p_{j}^{\theta}}\right)  ^{n_{j}%
(1-Y_{j})}=\exp \left(  -\theta L_{n}|V+\Lambda_{L_{n}|V}(\theta)\right)  ,
\label{exptwistln}%
\end{equation}
where
\[
\Lambda_{L_{n}|V}(\theta)=\log \mathbb{E}\left[  e^{\theta L_{n}}\left \vert
V=\frac{v}{\phi(1-f_{n})}\right.  \right]  =\sum_{j\leq|\mathcal{W}|}n_{j}%
\log \left(  1+p_{j}\left(  e^{\theta c_{j}}-1\right)  \right)
\]
is the cumulant generating function of $L_{n}$ conditional on $V$. For any
$\theta$, the estimator
\[
1_{\{L_{n}>nb|V\}}e^{-\theta L_{n}|V+\Lambda_{L_{n}|V}(\theta)}%
\]
is unbiased for $\mathbb{P}\left(  L_{n}>nb\left \vert V=\frac{v}{\phi
(1-f_{n})}\right.  \right)  $ if probabilities $\left \{  p_{j}^{\theta}%
:j\leq|\mathcal{W}|\right \}  $ are used to generate $L_{n}$. Equation
\eqref{exptwistln} formally establishes that applying an exponential twist on the
probabilities is equivalent to applying an exponential twist to $L_{n}|V$ itself.

It remains to choose the parameter $\theta$. A standard practice in IS is to
select a parameter $\theta$ that minimizes the upper bound of the second
moment of the estimator to reduce the variance. As we can see,
\[
\mathbb{E}_{\theta}\left[  1_{\{L_{n}>nb\}}e^{-2\theta L_{n}+2\Lambda_{L_{n}%
}(\theta)}\left \vert V=\frac{v}{\phi(1-f_{n})}\right.  \right]  \leq
e^{-2nb\theta+2\Lambda_{L_{n}|V}(\theta)},
\]
where $\mathbb{E}_{\theta}$ denotes expectation using the $\theta$-twisted
probabilities. The problem is then identical to finding a parameter $\theta$ that
maximizes $nb\theta-\Lambda_{L_{n}|V}(\theta)$. Straightforward calculation
shows that
\begin{equation}
\Lambda_{L_{n}|V}^{\prime}(\theta)=\sum_{j\leq|\mathcal{W}|}n_{j}c_{j}%
p_{j}^{\theta}=\mathbb{E}_{\theta}\left[  L_{n}\left \vert V=\frac{v}%
{\phi(1-f_{n})}\right.  \right]  . \label{cumulantderivative}%
\end{equation}
By the strictly increasing property of $\Lambda_{L_{n}|V}^{\prime}(\theta)$,
the maximum is attained at
\begin{equation}
\theta^{\ast}=\left \{
\begin{array}
[c]{lc}%
\text{unique solution to }\Lambda_{L_{n}|V}^{\prime}(\theta)=nb, &
nb>\Lambda_{L_{n}|V}^{\prime}(0),\\
0, & nb\leq \Lambda_{L_{n}|V}^{\prime}(0).
\end{array}
\right.  \label{thetatwocase}%
\end{equation}
By \eqref{cumulantderivative}, the two cases in \eqref{thetatwocase} are
distinguished by the value of $\mathbb{E}\left[  L_{n}\left \vert V=\frac
{v}{\phi(1-f_{n})}\right.  \right]  =\sum_{j\leq|\mathcal{W}|}n_{j}c_{j}p_{j}%
$. For the former case, our choice of twisting parameter $\theta^{\ast}$ shifts
the distribution of $L_{n}$ so that the average portfolio loss is $b$; while
for the latter case, the event $\{L_{n}>nb\}$ is not rare, so we use the
original probabilities.

\subsubsection{Algorithm\label{sssalgo}}

Now we are ready to present the algorithm. It consists of three stages. First,
a sample of $V$ is generated using hazard rate twisting. Depending on the
value of $V$, samples of the Bernoulli variables $1_{\{U_{i}>1-l_{i}f_{n}\}}$
are generated in the second step, using either naive simulation (original
probabilities) or importance sampling. The details on how to adjust
conditional default probabilities have already been discussed in the previous
subsections. Finally we compute the portfolio loss $L_{n}$ and return the
estimator after incorporating the likelihood ratio.

The following algorithm is for each replication.

\begin{center}
\fbox{\parbox{\textwidth}
{\textbf{IS Algorithm}
\begin{enumerate}
[\textit{Step} 1.]
\item Generate a sample of $V$ using the density $f_{V}^{*}$.
\item If the average portfolio loss under $V$ is greater than $b$, i.e.\[
\mathbb{E}\left[  L_{n}\left \vert V=\frac{v}{\phi(1-f_{n})}\right.  \right]
=\sum_{j\leq|\mathcal{W}|}n_{j}c_{j}p_{j}>nb,
\]
for each $i=1,2,...,n$, generate samples of $Y_{i}=1_{\{U_{i}>1-l_{i}f_{n}\}}$
independent of each other using the unchanged probability $p_{i}^{\ast}=p_{i}$. Otherwise, use $p_{i}^{\ast}=p_{i}^{\theta^{\ast}}$.
\item Calculate the portfolio loss $L_{n}=\sum_{i=1}^{n}c_{i}Y_{i}$ and return
the estimator
\begin{equation}
1_{\{L_{n}>nb\}}\frac{f_{V}(V)}{f_{V}^{\ast}(V)}\prod_{j\leq|\mathcal{W}|}\left(  \frac{p_{j}}{p_{j}^{\ast}}\right)  ^{n_{j}Y_{j}}\left(
\frac{1-p_{j}}{1-p_{j}^{\ast}}\right)  ^{n_{j}(1-Y_{j})}, \label{isestimator}\end{equation}
where $n_{j}Y_{j}$ denotes the number of defaults in sub-portfolio $j$ within
 a single simulation run.
\end{enumerate}
}}
\end{center}

\bigskip

Let $\mathbb{P}^{\ast}$ and $\mathbb{E}^{\ast}$ denote the IS\ probability
measure and expectation corresponding to this algorithm. The likelihood ratio
is given by
\[
L^{\ast}=\frac{f_{V}(V)}{f_{V}^{\ast}(V)}\prod_{j\leq|\mathcal{W}|}\left(
\frac{p_{j}}{p_{j}^{\ast}}\right)  ^{n_{j}Y_{j}}\left(  \frac{1-p_{j}}%
{1-p_{j}^{\ast}}\right)  ^{n_{j}(1-Y_{j})}.
\]
The following lemma is important in demonstrating the efficiency of our IS Algorithm.

\begin{lemma}
\label{thasyoptimal} Under the same assumptions as in Theorem \ref{th4.1}, we
have
\[
\frac{\log \mathbb{E}^{*}\left[  1_{\{L_{n}>nb\}}L^{*^{2}}\right]  }{\log
f_{n}}\to2,\quad \text{as }n\to \infty.
\]

\end{lemma}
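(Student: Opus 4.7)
The claim is equivalent to sandwiching $\mathbb{E}^{\ast}[1_{\{L_n>nb\}}L^{\ast 2}]$ between two quantities that both behave like $f_n^{2+o(1)}$, after which taking logarithms and dividing by $\log f_n<0$ delivers the limit $2$. The lower inequality is immediate from Theorem~\ref{th4.1} and Jensen: since the IS estimator is unbiased,
\[
\mathbb{E}^{\ast}\bigl[1_{\{L_n>nb\}}L^{\ast 2}\bigr]\ge\bigl(\mathbb{E}^{\ast}[1_{\{L_n>nb\}}L^{\ast}]\bigr)^{2}=\mathbb{P}(L_n>nb)^{2}\sim\frac{(v^{\ast})^{-2/\alpha}}{\Gamma(1-1/\alpha)^{2}}\,f_n^{2},
\]
which yields $\log\mathbb{E}^{\ast}[\cdots]/\log f_n\le 2+o(1)$. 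The substance of the lemma thus lies in the matching upper bound $\mathbb{E}^{\ast}[1_{\{L_n>nb\}}L^{\ast 2}]\le C f_n^{2-o(1)}$.

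For the upper bound I would decompose according to whether $V$ falls in
\[
\mathcal{A}:=\bigl\{v:\mathbb{E}[L_n\mid V=v/\phi(1-f_n)]>nb\bigr\},
\]
which for large $n$ is essentially $\{V>v^{\ast}/\phi(1-f_n)\}$, mirroring the two branches of Step~2 of the algorithm. On $\mathcal{A}$ the Bernoulli probabilities are not twisted and $L^{\ast}=f_V(V)/f_V^{\ast}(V)$, so the contribution is at most $\int_{\mathcal{A}}(f_V/f_V^{\ast})(v)\,f_V(v)\,\mathrm{d}v$. Inserting the explicit Pareto form of $f_V^{\ast}$ on $[x_0,\infty)$ with shape $\beta_n=-1/\log\phi(1-f_n)\downarrow 0$, invoking the Monotone Density Theorem for $f_V$ (legitimate by the eventual monotonicity in Assumption~\ref{A2}), and using Karamata's Tauberian theorem to translate $\phi(1-f_n)^{1/\alpha}\asymp f_n$ up to slowly varying factors, a direct calculation yields $f_V(v)/f_V^{\ast}(v)\le f_n^{1-o(1)}$ uniformly on $\mathcal{A}$. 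Since $\int_{\mathcal{A}}f_V=\overline{F}_V(v^{\ast}/\phi(1-f_n))\sim(v^{\ast})^{-1/\alpha}f_n/\Gamma(1-1/\alpha)$ by the same regular-variation computation driving Theorem~\ref{th4.1}, multiplying produces the desired $O(f_n^{2-o(1)})$ bound.

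On $\mathcal{A}^{c}$ the exponential-twisting machinery is active and $L^{\ast}=(f_V(V)/f_V^{\ast}(V))\exp(-\theta^{\ast}L_n+\Lambda_{L_n|V}(\theta^{\ast}))$. Bounding $\exp(-\theta^{\ast}L_n)\le\exp(-\theta^{\ast}nb)$ on $\{L_n>nb\}$ and using that $\Lambda'_{L_n|V}(\theta^{\ast})=nb$ makes $-\theta^{\ast}nb+\Lambda_{L_n|V}(\theta^{\ast})=-nI_V(b)$ the Legendre transform, with rate $I_V(b)>0$ throughout $\mathcal{A}^{c}$. Hence the conditional second moment is dominated by $\exp(-2nI_V(b))$, so after integrating this case contributes something exponentially small in $n$. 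The assumption $e^{-n\beta}=o(f_n)$ for every $\beta>0$ forces $\log f_n=o(n)$, so any exponentially decaying term is $o(f_n^{k})$ for every $k$, and is in particular absorbed by the Case-$\mathcal{A}$ bound.

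The principal obstacle is the sharp Case-$\mathcal{A}$ estimate: $f_V^{\ast}$ is only an asymptotic hazard-rate twist of $f_V$, so matching its $f_n$-dependence—through both the Pareto shape $\beta_n$ and the normalizing constant $\overline{F}_V(x_0)\,x_0^{\beta_n}\,\beta_n$—against the regular variation of $f_V$ and of $\phi$ near~$1$ with enough precision to extract $f_n^{2-o(1)}$ rather than merely $f_n$ is the delicate step. This is precisely where the role of $x_0$ emphasized in the preceding Remark becomes essential, since it keeps $f_V/f_V^{\ast}$ controlled in the transition region around $x_0$ and prevents the normalization factor $1/\beta_n\sim\alpha|\log f_n|$ from inflating the bound to a strictly worse rate.
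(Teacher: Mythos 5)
Your overall architecture coincides with the paper's: the lower bound via Jensen plus Theorem \ref{th4.1} is exactly the paper's Step 2, and your Case-$\mathcal{A}$ estimate (Potter bounds on $f_V$, the explicit Pareto tail of $f_V^{\ast}$, and $\log\phi(1-f_n)\sim\alpha\log f_n$ to get $f_V/f_V^{\ast}\le f_n^{1-o(1)}$ on the large-$V$ region, multiplied against $\overline{F}_V(v^{\ast}/\phi(1-f_n))=O(f_n)$) is the content of the paper's Lemma \ref{leratiobound} and Step 3. However, your treatment of $\mathcal{A}^{c}$ contains a genuine gap. You claim the Legendre rate $I_V(b)$ is positive throughout $\mathcal{A}^{c}$ and conclude that $\int_{\mathcal{A}^c}\exp(-2nI_v(b))\,(f_V/f_V^{\ast})f_V\,\mathrm{d}v$ is exponentially small in $n$. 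But $\mathcal{A}^{c}=\{\mathbb{E}[L_n\mid V]\le nb\}$ contains its own boundary, where $\theta^{\ast}=0$ and $I_V(b)=0$, and $I_V(b)\downarrow 0$ continuously as $v$ approaches that boundary. Since the $V$-mass of any fixed neighbourhood of the boundary is of order $f_n$ (not exponentially small, by the standing assumption $e^{-n\beta}=o(f_n)$), the integral over $\mathcal{A}^{c}$ is \emph{not} exponentially small: taking $v$ with $r(v)\approx b-n^{-1/2}$ shows the contribution is at least of polynomial order in $f_n$ and $n$. So the step "this case contributes something exponentially small in $n$" fails as stated.

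The conclusion survives, and the repair is precisely the paper's device: split at $v_{\delta}^{\ast}$ solving $r(v)=b-\delta$ rather than at the exact twisting threshold. On $\{V\le v_{\delta}^{\ast}/\phi(1-f_n)\}$ the conditional mean loss is bounded away from $b$ by $\delta$, so the Hoeffding bound (Lemma \ref{leld}) gives $\mathbb{P}(L_n>nb,\,V\le v_{\delta}^{\ast}/\phi(1-f_n))\le e^{-\beta n}$ uniformly, and combined with the crude uniform bound $f_V/f_V^{\ast}\le C(-\log\phi(1-f_n))$ and the inequality $1_{\{L_n>nb\}}\hat{L}\le 1_{\{L_n>nb\}}$ (which holds because $-\theta nb+\Lambda_{L_n|V}(\theta)$ is minimized at $\theta^{\ast}$ and vanishes at $\theta=0$), this piece is genuinely $o(f_n^{k})$ for every $k$. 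The transition strip $\{v_{\delta}^{\ast}/\phi(1-f_n)<V\}\cap\mathcal{A}^{c}$ is then absorbed into the large-$V$ case and handled by the same $f_n^{1-o(1)}\times O(f_n)$ computation you already carry out on $\mathcal{A}$, using $\hat{L}\,1_{\{L_n>nb\}}\le 1$ in place of the exponential rate. With that modification your argument matches the paper's proof.
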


In view of Theorem \ref{th4.1}, which provides the asymptotic estimate of the
tail probability $\mathbb{P}\left(  L_{n}>nb\right)  $, we conclude in the
following theorem that our proposed algorithm is asymptotically optimal.

\begin{theorem}
\label{coasyoptimal} Under the same assumptions as in Theorem \ref{th4.1}, we
have
\[
\lim_{n\rightarrow \infty}\frac{\log \mathbb{E}^{\ast}\left[  1_{\{L_{n}%
>nb\}}L^{\ast^{2}}\right]  }{\log \mathbb{P}\left(  L_{n}>nb\right)  }=2.
\]

\end{theorem}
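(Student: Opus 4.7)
The plan is to combine the sharp tail asymptotic from Theorem \ref{th4.1} with the logarithmic bound on the second moment from Lemma \ref{thasyoptimal}. The key observation is that Theorem \ref{th4.1} gives $\mathbb{P}(L_n>nb)\sim K f_n$ for a positive constant $K=(v^\ast)^{-1/\alpha}/\Gamma(1-1/\alpha)$ that does not depend on $n$, so on the logarithmic scale $\log\mathbb{P}(L_n>nb)$ and $\log f_n$ differ only by an $O(1)$ term, whereas both diverge to $-\infty$ under the hypothesis $f_n\to 0$.

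First I would take the logarithm of the asymptotic in Theorem \ref{th4.1} to obtain
\[
\log\mathbb{P}(L_n>nb)=\log f_n+\log K+\log(1+o(1)),
\]
and then, since $f_n\to 0$ forces $\log f_n\to-\infty$, divide through by $\log f_n$ to get
\[
\frac{\log\mathbb{P}(L_n>nb)}{\log f_n}\longrightarrow 1\qquad\text{as }n\to\infty.
\]

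Next I would invoke Lemma \ref{thasyoptimal}, which states that
\[
\frac{\log\mathbb{E}^{\ast}\!\left[1_{\{L_n>nb\}}L^{\ast\,2}\right]}{\log f_n}\longrightarrow 2.
\]
The conclusion then follows by writing
\[
\frac{\log\mathbb{E}^{\ast}\!\left[1_{\{L_n>nb\}}L^{\ast\,2}\right]}{\log\mathbb{P}(L_n>nb)}=\frac{\log\mathbb{E}^{\ast}\!\left[1_{\{L_n>nb\}}L^{\ast\,2}\right]}{\log f_n}\cdot\frac{\log f_n}{\log\mathbb{P}(L_n>nb)},
\]
and passing to the limit: the first factor tends to $2$ by the lemma, and the second factor tends to $1$ by the previous paragraph, so the product tends to $2$ as required.

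In this framing there is essentially no additional obstacle, because the real work has already been done in Theorem \ref{th4.1} (lower bound of the tail probability) and Lemma \ref{thasyoptimal} (upper bound of the second moment of the IS estimator). The only care needed is to notice that the constant factor in the tail asymptotic is benign on the log scale, which uses crucially that $\alpha>1$ (so $\Gamma(1-1/\alpha)\in(0,\infty)$) and $v^\ast\in(0,\infty)$ from \eqref{eq}, guaranteeing that $\log K$ is finite and therefore asymptotically negligible compared to $\log f_n\to-\infty$.
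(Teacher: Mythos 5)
Your proposal is correct and follows exactly the route the paper intends: Lemma \ref{thasyoptimal} gives the limit of $\log \mathbb{E}^{\ast}[1_{\{L_{n}>nb\}}L^{\ast^{2}}]/\log f_{n}$, Theorem \ref{th4.1} shows $\log \mathbb{P}(L_{n}>nb)/\log f_{n}\to 1$ because the constant $(v^{\ast})^{-1/\alpha}/\Gamma(1-1/\alpha)$ is finite and positive while $\log f_{n}\to-\infty$, and the conclusion follows by multiplying the two ratios. No gaps.
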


Thus, the IS estimator  \eqref{isestimator} achieves asymptotic zero
variance on the logarithmic scale.

\subsection{Importance sampling for expected shortfall}

In risk management, one is usually interested in estimating the expected
shortfall at a confidence level close to $1$, which is again a rare event
simulation. In this subsection, we discuss how to apply our proposed IS
algorithm to estimate the expected shortfall.

First, note that the expected shortfall can be understood as follows,
\begin{equation}
\mathbb{E}\left[  L_{n}|L_{n}>nb\right]  =nb+\frac{\mathbb{E}\left[  \left(
L_{n}-nb\right)  _{+}\right]  }{\mathbb{P}\left(  L_{n}>nb\right)  }.
\label{es}%
\end{equation}
By involving the unbiasing likelihood ratio $L^{\ast}$, \eqref{es} is
equivalent to
\[
nb+\frac{\mathbb{E}^{\ast}\left[  \left(  L_{n}-nb\right)  _{+}L^{\ast
}\right]  }{\mathbb{E}^{\ast}\left[  1_{\{L_{n}>nb\}}L^{\ast}\right]  },
\]
where $\mathbb{E}^{\ast}$ is the expectation corresponding to the IS algorithm
in Section \ref{sssalgo}. Suppose $m$ i.i.d. samples $(L_{n}^{1},\ldots
,L_{n}^{m})$ are generated under measure $\mathbb{P}^{\ast}$. Let $L_{i}%
^{\ast}$ denote the corresponding likelihood ratio for each sample $i$. Then
the IS estimator of the expected shortfall is given as
\begin{equation}
nb+\frac{\sum_{i=1}^{m}(L_{n}^{i}-nb)_{+}L_{i}^{\ast}}{\sum_{i=1}%
^{m}1_{\{L_{n}^{i}>nb\}}L_{i}^{\ast}}. \label{ises}%
\end{equation}
Note that the samples generated to estimate the numerator in \eqref{ises} take
positive value only when large losses occur. Therefore, one can expect the IS
algorithm that works for estimating the probability of the event
$\{L_{n}>nb\}$ should also work well in estimating $\mathbb{E}[L_{n}-nb]_{+}$.
This is later confirmed by our numerical results.

\section{Conditional Monte Carlo Simulations for Large Portfolio
Loss\label{scmcs}}

In this section, we propose another estimation method based on the conditional
Monte Carlo approach, which is another variance reduction technique; see,
e.g., \cite{asmussen2006improved} and \cite{asmussen2018conditional}. Our
proposed algorithm is motivated by \cite{chan2010efficient}, in which the
authors derived simple simulation algorithms to estimate the probability of
large portfolio losses under the $t$-copula.

By utilizing the stochastic representation \eqref{representlt} for
LT-Archimedean and the asymptotic expansions in Theorem \ref{th4.1}, the rare
event $\{L_{n}>nb\}$ occurs primarily when the random variable $V$ takes
large value, while $\mathbf{R}=(R_{1},\ldots,R_{n})$ generally has little
influence on the occurrence of the rare event. This simply suggests that an
approach by integrating out $V$ analytically could lead to a substantial
variance reduction.

To proceed, it is useful to define
\begin{equation}
O_{i}=\frac{R_{i}}{\phi(1-l_{i}f_{n})},i=1,\ldots,n. \label{rtoo}%
\end{equation}
The individual obligor defaults if $U_{i}>1-l_{i}f_{n}$, then $V>O_{i}$. Thus,
the portfolio loss in \eqref{m1} can be rewritten as,
\[
L_{n}=\sum_{i=1}^{n}c_{i}1_{\{V>O_{i}\}}.
\]
We rank $O_{1},\ldots,O_{n}$ as $O_{(1)}\leq O_{(2)}\leq \cdots \leq O_{(n)}$,
and let $c_{(i)}$ denote the associated exposure at default with $O_{(i)}$.
Then, one can check that the event $\{L_{n}>nb\}$ happens if and only if
$V>O_{(k)}$, where $k=\min \{l:\sum_{i=1}^{l}c_{(i)}>nb\}$. Particularly, if
$c_{i}\equiv c$ for all $i=1,\ldots,n$, then $k=\lceil nb/c\rceil$. Now
conditional on $\mathbf{R}$, we have
\begin{equation}
\mathbb{P}\left(  L_{n}>nb|\mathbf{R}\right)  =\mathbb{P}\left(
V>O_{(k)}|\mathbf{R}\right)  :=S(\mathbf{R}). \label{condmc1est}%
\end{equation}
We summarize our proposed conditional Monte Carlo algorithm, which is labelled as CondMC, in the following algorithm.

\begin{center}
\fbox{\parbox{\textwidth}{\textbf{Conditional Monte Carlo (CondMC) Algorithm}
\begin{enumerate}[\textit{Step} 1.]
\item Generate independent standard exponential random variables $R_1,\ldots,R_n$.
\item For $i=1,\ldots,n$, transform $R_i$ to $O_i$ according to \eqref{rtoo}.
\item Find $O_{(k)}$ and return the conditional Monte Carlo estimator $S(\mathbf{R})$ based on \eqref{condmc1est}.
\end{enumerate}}}
\end{center}

We now show that the conditional Monte Carlo estimator has bounded relative
error, a stronger notion of asymptotic optimality than that for the IS
estimator (\ref{isestimator}) established in Theorem \ref{coasyoptimal}.

\begin{lemma}
\label{thboundederror} Under the same assumptions as in Theorem \ref{th4.1}
except that $\frac{1}{n}=O(f_{n})$, we have
\[
\limsup_{n\to \infty}\frac{\mathbb{E}\left[  S^{2}(\mathbf{R})\right]  }%
{f_{n}^{2}}<\infty.
\]

\end{lemma}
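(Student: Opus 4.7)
First I observe that $S(\mathbf{R}) = \mathbb{P}(V > O_{(k)} \mid \mathbf{R}) = \overline{F}_{V}(O_{(k)})$, so the task is to show $\mathbb{E}[\overline{F}_{V}(O_{(k)})^{2}] = O(f_{n}^{2})$. My strategy is to split the expectation according to whether $O_{(k)}$ is near its ``typical'' scale $v^{\ast}/\phi(1-f_{n})$ or substantially below it. Fix $\eta \in (0,1)$, set $v_{n} = v^{\ast}/\phi(1-f_{n})$, and decompose
\begin{equation*}
\mathbb{E}[\overline{F}_{V}(O_{(k)})^{2}] = \mathbb{E}\bigl[\overline{F}_{V}(O_{(k)})^{2}\,1_{\{O_{(k)} \geq \eta v_{n}\}}\bigr] + \mathbb{E}\bigl[\overline{F}_{V}(O_{(k)})^{2}\,1_{\{O_{(k)} < \eta v_{n}\}}\bigr].
\end{equation*}

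For the first piece, monotonicity of $\overline{F}_{V}$ gives $\overline{F}_{V}(O_{(k)}) \leq \overline{F}_{V}(\eta v_{n})$ on the indicator event. The property $\overline{F}_{V} \in \mathrm{RV}_{-1/\alpha}$, combined with the Karamata--Tauberian relation between $\overline{F}_{V}$ at infinity and $\phi(1-1/\cdot)$ at infinity (the same ingredient used to derive $\overline{F}_{V}(v_{n}) \sim (v^{\ast})^{-1/\alpha} f_{n}/\Gamma(1-1/\alpha)$ in Theorem \ref{th4.1}), then yields
\begin{equation*}
\overline{F}_{V}(\eta v_{n})^{2} \sim \frac{\eta^{-2/\alpha}(v^{\ast})^{-2/\alpha}}{\Gamma(1-1/\alpha)^{2}}\,f_{n}^{2},
\end{equation*}
so this piece is $O(f_{n}^{2})$.

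For the second piece I use $\overline{F}_{V} \leq 1$, reducing matters to showing $\mathbb{P}(O_{(k)} < \eta v_{n}) = o(f_{n}^{2})$. By the definition of $k$ together with $O_{i} = R_{i}/\phi(1-l_{i}f_{n})$, the threshold event can be recast as a portfolio-loss event at the deterministic level $v = \eta v_n$,
\begin{equation*}
\{O_{(k)} < \eta v_{n}\} = \Bigl\{\sum_{i=1}^{n} c_{i}\,1_{\{R_{i} < \eta v^{\ast}\phi(1-l_{i}f_{n})/\phi(1-f_{n})\}} > nb\Bigr\}.
\end{equation*}
By regular variation of $\phi(1-\cdot)$ at $0^{+}$, $\phi(1-l_{i}f_{n})/\phi(1-f_{n}) \to l_{i}^{\alpha}$, uniformly over the finite value-set of $l_{i}$ (Potter bounds). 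Combined with Assumption \ref{ass1}, the normalised sum above converges almost surely to $\sum_{j} c_{j} w_{j}(1 - e^{-\eta v^{\ast} l_{j}^{\alpha}}) = r(\eta v^{\ast})$, which is strictly less than $r(v^{\ast}) = b$ by the strict monotonicity of $r$. A standard Chernoff--Cram\'er bound for independent, bounded summands then yields $\mathbb{P}(O_{(k)} < \eta v_{n}) \leq C_{1} e^{-C_{2}n}$ for some $C_{1}, C_{2} > 0$. Under the hypothesis $1/n = O(f_{n})$ we have $f_{n}^{2} \geq c/n^{2}$ eventually, so $e^{-C_{2}n} = o(f_{n}^{2})$, and the second piece is negligible.

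Combining the two pieces gives $\limsup_{n \to \infty} \mathbb{E}[S^{2}(\mathbf{R})]/f_{n}^{2} < \infty$. The main technical obstacle is making the Cram\'er exponent $C_{2}$ uniform in $n$: the Bernoulli success probabilities in the sum depend on $n$ through $\phi(1-l_{i}f_{n})/\phi(1-f_{n})$, so one must verify that the limiting Legendre transform evaluated at $b$ is bounded away from zero. This is precisely where the uniformity of the Potter estimate over the finite set of $l_{i}$-values, together with the strict gap $r(\eta v^{\ast}) < b$, plays the crucial role; the strengthened assumption $1/n = O(f_n)$ (which replaces the weaker $\exp(-n\beta) = o(f_n)$ of Theorem \ref{th4.1}) is what allows us to absorb the exponential remainder into $f_n^{2}$ rather than only into $f_n$.
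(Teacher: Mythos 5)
Your proof is correct, but it takes a genuinely different route from the paper's. The paper lower-bounds $O_{(k)}$ by $R_{(\lfloor nb'\rfloor)}/\phi\bigl(1-\max_{j}l_{j}f_{n}\bigr)$ with $b'=b/\max_{j}c_{j}$, and controls the order statistic $R_{(\lfloor nb'\rfloor)}$ of i.i.d.\ exponentials via R\'enyi's representation and Chebyshev's inequality (Lemma \ref{leorder}), which yields a ``bad event'' of probability only $O(1/n)$; this is exactly where the extra hypothesis $1/n=O(f_{n})$ enters. You instead split at the deterministic threshold $\eta v^{\ast}/\phi(1-f_{n})$ and observe that $\{O_{(k)}<\eta v^{\ast}/\phi(1-f_{n})\}$ is precisely the conditional large-loss event $\{L_{n}>nb\}$ at the subcritical level $v=\eta v^{\ast}$ (your set identity $\{O_{(k)}<w\}=\{\sum_{i}c_{i}1_{\{O_{i}<w\}}>nb\}$ checks out, including ties, by the minimality of $k$), so the paper's own Hoeffding-based concentration bound (Lemma \ref{leld}) applies verbatim with $r(\eta v^{\ast})<b$ and gives an exponentially small bad event. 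This buys you two things. First, bounding $\overline{F}_{V}\leq 1$ on the bad event is then unproblematic, since $e^{-C_{2}n}=o(f_{n}^{2})$; with the paper's $O(1/n)$ bound one has to be more delicate, because after bounding the conditional probability by $1$ on the bad event one needs that event's probability to be $o(f_{n}^{2})$, not merely $O(f_{n})$, and the paper's displayed chain of inequalities (which moves $\mathbb{P}(R_{(\lfloor nb'\rfloor)}<\cdot)$ inside the square next to the $O(f_{n})$ main term) is loose at exactly this point. Second, your route does not actually need the strengthened hypothesis at all: since $e^{-n\beta}=\bigl(e^{-n\beta/2}\bigr)^{2}=o(f_{n}^{2})$ already under the assumption $\exp(-n\beta)=o(f_{n})$ of Theorem \ref{th4.1}, your last paragraph slightly mischaracterizes the role of $1/n=O(f_{n})$ — in your argument it is superfluous, which means you have in fact proved the lemma under weaker hypotheses than stated. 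The only ingredient you should make explicit is the uniformity in Lemma \ref{leld} (the convergence $\phi(1-l_{j}f_{n})/\phi(1-f_{n})\to l_{j}^{\alpha}$ over the finite set $\mathcal{W}$), which you correctly flag as the technical crux.
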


In view of Theorem \ref{th4.1}, we immediately obtain the following theorem
concerning the algorithm efficiency.

\begin{theorem}
\label{coboundederror} Under the same assumptions as in Lemma
\ref{thboundederror}, we have
\[
\limsup_{n\rightarrow \infty}\frac{\sqrt{\mathbb{E}\left[  S^{2}(\mathbf{R}%
)\right]  }}{\mathbb{P}\left(  L_{n}>nb\right)  }<\infty.
\]
In other words, the conditional Monte Carlo estimator \eqref{condmc1est} has
bounded relative error.
\end{theorem}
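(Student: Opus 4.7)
The plan is to combine the second-moment control from Lemma \ref{thboundederror} with the sharp tail asymptotic from Theorem \ref{th4.1}; given those two inputs, the result is a short deduction. First I would record the unbiasedness observation: since $S(\mathbf{R})=\mathbb{P}(L_n>nb\mid \mathbf{R})$, we have $\mathbb{E}[S(\mathbf{R})]=\mathbb{P}(L_n>nb)$, so the CondMC estimator estimates the target without bias and the only quantity to control is the ratio of $\sqrt{\mathbb{E}[S^2(\mathbf{R})]}$ to $\mathbb{P}(L_n>nb)$.

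Next I would invoke Lemma \ref{thboundederror} to obtain a constant $M<\infty$ and an index $n_0$ such that $\mathbb{E}[S^2(\mathbf{R})]\leq M f_n^2$ for all $n\geq n_0$. In parallel, Theorem \ref{th4.1} yields
$$\mathbb{P}(L_n>nb)\sim f_n\cdot K,\qquad K:=\frac{(v^{\ast})^{-1/\alpha}}{\Gamma(1-1/\alpha)},$$
and I would explicitly verify that $K\in(0,\infty)$: the root $v^{\ast}$ of $r(v)=b$ exists uniquely in $(0,\infty)$ for $b\in(0,\bar c)$ by the strict monotonicity of $r$ noted after \eqref{eq}, while $\Gamma(1-1/\alpha)$ is finite because $\alpha>1$. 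Hence for all sufficiently large $n$ we have $\mathbb{P}(L_n>nb)\geq (K/2) f_n$.

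Combining the two bounds gives, for all $n$ large enough,
$$\frac{\sqrt{\mathbb{E}[S^2(\mathbf{R})]}}{\mathbb{P}(L_n>nb)}\leq \frac{\sqrt{M}\,f_n}{(K/2)f_n}=\frac{2\sqrt{M}}{K},$$
a finite constant independent of $n$. Taking limsup yields the bounded relative error conclusion. Since the common factor $f_n$ cancels exactly, there is no delicate estimate to make here beyond quoting the two prior results and checking $K>0$; the statement is in that sense a corollary, and I do not foresee any real obstacle in this deduction. The genuine technical work lives upstream in Lemma \ref{thboundederror}, where the additional hypothesis $1/n=O(f_n)$ is needed to control $\mathbb{E}[S^2(\mathbf{R})]/f_n^2$ uniformly in $n$, and in Theorem \ref{th4.1}, which supplies the matching lower bound on the denominator.
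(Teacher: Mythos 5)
Your proposal is correct and matches the paper's treatment: the paper presents this theorem as an immediate corollary of Lemma \ref{thboundederror} (numerator bound $\mathbb{E}[S^{2}(\mathbf{R})]=O(f_{n}^{2})$) combined with the sharp asymptotic $\mathbb{P}(L_{n}>nb)\sim f_{n}(v^{\ast})^{-1/\alpha}/\Gamma(1-1/\alpha)$ from Theorem \ref{th4.1}, exactly as you argue. Your explicit check that the asymptotic constant is strictly positive and finite is a sensible addition but does not change the route.
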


\section{Numerical Results\label{snr}}

{In this section, we assess the relative performance of our proposed algorithms via simulations, and investigate their sensitivity to $\alpha$ (heavy tailedness
of the systematic risk factor $V$), $n$ (size of the portfolio) and $b$ (a
pre-fixed number that controls the level of the proportion of obligors who
default). The numerical results indicate that our proposed algorithms, especially the
CondMC algorithm, provide considerable variance reductions when compared to
naive MC simulations. This supports our theoretical result that our proposed
algorithms are all asymptotically optimal.}


Due to the assumption that $\phi(1-\frac{1}{\cdot})\in \mathrm{RV}_{-\alpha}$
with $\alpha>1$, we consider the Gumbel copula in our numerical experiment.
The generator function of Gumbel copula is $\phi(t)=(-\ln(t))^{\alpha}$ with
$\alpha>1$. By varying $\alpha$, the Gumbel copula covers from independence
($\alpha \rightarrow1$) to comonotonicity ($\alpha \rightarrow \infty$).

In all the experiments below, only homogeneous portfolios are considered.
However, it should be emphasized that the performance of our algorithms is not
affected for inhomogeneous portfolio. This is asserted by Theorem
\ref{coasyoptimal} and Theorem \ref{coboundederror} that have been proved under a general
setting for both homogeneous and inhomogeneous portfolios. To evaluate the
accuracy of the estimators, for each set of specified parameters, we generate
50,000 samples for our proposed algorithms, estimate the probability of large
portfolio loss, and provide the relative error (in $\%$), which is defined as
the ratio of the estimator's standard deviation to the estimator. More
precisely, if $\hat{p}$ is an unbiased estimator of $\mathbb{P}\left(
L_{n}>nb\right)  $, its relative error is defied as $\sqrt{\mathrm{Var}%
(\hat{p})}/\hat{p}$. We also report the variance reduction achieved by our
proposed algorithms compared with naive simulation. For naive simulation, it
is highly possible that the rare event would not be observed in any sample
path with only 50,000 samples. Therefore, variance under naive simulation is
estimated indirectly by exploiting the fact that variance for Bernoulli($p$)
equals $p(1-p)$.

Table \ref{t2} provides a first comparison of our IS algorithm and CondMC algorithm
with naive simulation as $\alpha$ changes. The chosen model parameter values 
are $n=500$, $f_{n}=1/n$, $b=0.8$, $l_{i}=0.5$ and $c_{i}=1$ for each $i$. As
can be concluded from Table \ref{t2}, both algorithms outperform the naive
simulation, especially when $\alpha$ is small, obligors have weaker dependence
and the probability of large portfolio losses becomes smaller. Relative to the naive MC method,  the variance reduction attained by the IS estimator is in the order of hundreds and thousands while the CondMC estimator is in the order of millions. 
This demonstrates that CondMC estimator significantly outperforms IS estimator. 

\begin{table}[ptbh]
\begin{center}%
\begin{tabular}
[c]{lccccrr}\hline
& \multicolumn{2}{c}{Prob. estimate} & \multicolumn{2}{c}{Relative error (\%)}
& \multicolumn{2}{c}{Variance reduction}\\ \cline{2-7}%
$\alpha$ & IS & CondMC & IS & CondMC & IS & CondMC\\ \hline
1.1 & 6.112$\times10^{-5}$ & 6.208$\times10^{-5}$ & 1.468 & 0.023 & 1,519 &
6,248,304\\
1.5 & 2.652$\times10^{-4}$ & 2.726$\times10^{-4}$ & 1.554 & 0.017 & 312 &
2,658,936\\
2 & 4.436$\times10^{-4}$ & 4.457$\times10^{-4}$ & 1.542 & 0.012 & 189 &
2,910,515\\
5 & 7.706$\times10^{-4}$ & 7.815$\times10^{-4}$ & 1.575 & 0.005 & 105 &
10,338,790\\ \hline
\end{tabular}
\end{center}
\caption{Performance of the proposed algorithms for Gumbel copula under
different values of $\alpha$.}%
\label{t2}%
\end{table}

In Table \ref{t3}, we perform the same comparison by varying $b$ while keeping
$\alpha$ fixed at $1.5$. Under the setting that $c=1$, the parameter $b$
controls the level of the proportion of obligors that default. As is clear from
the table, when $b$ increases, the estimated probability decreases and the
variance reduction becomes larger. \begin{table}[ptbh]
\begin{center}%
\begin{tabular}
[c]{ccccccr}\hline
& \multicolumn{2}{c}{Prob. estimate} & \multicolumn{2}{c}{Relative error (\%)}
& \multicolumn{2}{c}{Variance reduction}\\ \cline{2-7}%
$b$ & IS & CondMC & IS & CondMC & IS & CondMC\\ \hline
0.3 & 7.415$\times10^{-4}$ & 7.437$\times10^{-4}$ & 1.414 & 0.024 & 135 &
447,754\\
0.5 & 4.714$\times10^{-4}$ & 4.776$\times10^{-4}$ & 1.462 & 0.019 & 198 &
1,130,242\\
0.7 & 3.293$\times10^{-4}$ & 3.306$\times10^{-4}$ & 1.506 & 0.017 & 268 &
2,129,103\\
0.9 & 2.101$\times10^{-4}$ & 2.151$\times10^{-4}$ & 1.569 & 0.017 & 386 &
3,090,169\\ \hline
\end{tabular}
\end{center}
\caption{Performance of the proposed algorithms for Gumbel copula under
different values of $b$.}%
\label{t3}%
\end{table}

Table \ref{t4} provides the relative error and variance reduction of our
algorithms compared with naive simulation as the number of obligors changes.
All other parameters are identical to previous experiments by fixing
$\alpha=1.5$ and $b=0.8$. In the last column, we also derive the sharp
asymptotic for the desired probability of large portfolio loss based on the
expression in \eqref{sa1}. Note that as $n$ increases, both the accuracy of
the sharp asymptotic and the reduction in variance improve. \begin{table}[ptbh]
\begin{center}%
\begin{tabular}
[c]{llccccrrl}\hline
& \multicolumn{2}{c}{Prob. estimate} & \multicolumn{2}{c}{Relative error (\%)}
& \multicolumn{2}{c}{Variance reduction} & \\ \cline{2-7}%
$n$ & IS & CondMC & IS & CondMC & IS & CondMC & Asymptotic\\ \hline
100 & 1.373$\times10^{-3}$ & 1.381$\times10^{-3}$ & 1.398 & 0.037 & 74 &
105,710 & 1.359$\times10^{-3}$\\
250 & 5.372$\times10^{-4}$ & 5.470$\times10^{-4}$ & 1.487 & 0.023 & 168 &
670,052 & 5.436$\times10^{-4}$\\
500 & 2.723$\times10^{-4}$ & 2.727$\times10^{-4}$ & 1.529 & 0.017 & 314 &
2,671,423 & 2.718$\times10^{-4}$\\
1,000 & 1.356$\times10^{-4}$ & 1.361$\times10^{-4}$ & 1.640 & 0.012 & 582 &
10,608,750 & 1.359$\times10^{-4}$\\ \hline
\end{tabular}
\end{center}
\caption{Performance of the proposed algorithms for Gumbel copula together
with the sharp asymptotic derived in Theorem \ref{th4.1} under different
values of $n$.}%
\label{t4}%
\end{table}

In Table \ref{t5}, we study the accuracy of the sharp asymptotic for expected
shortfall as the number of obligors increases. Model parameters are taken to
be $f_{n}=1/n$, $\alpha=1.5$, $b=0.8$, $l_{i}=0.5$ and $c=1$ for each $i$. For
estimating expected shortfall, we simply use all the 50,000 sample paths
generated under the proposed IS measure, and then consider those with
portfolio loss exceeding $nb$. As shown in Table \ref{t5}, the accuracy is
quite high even for small values of $n$. This is mainly due to the fact that
the hazard rate density is chosen based on the asymptotic result in Theorem
\ref{th4.1}. Discrepancy here is measured as the percentage difference between
the ES estimated via importance sampling and the sharp asymptotic in \eqref{essa}.

\begin{table}[ptbh]
\begin{center}%
\begin{tabular}
[c]{rrrc}\hline
$n$ & ES estimate & Asymptotic & Discrepancy(\%) \\ \hline
50 & 47.886 & 47.695 & 0.399  \\
100 & 95.573 & 95.390 & 0.191  \\
250 & 238.873 & 238.475 & 0.167  \\
500 & 477.558 & 476.950 & 0.127  \\ \hline
\end{tabular}
\end{center}
\caption{The expected shortfall and its sharp asymptotic derived in Theorem
\ref{th4.2} under different values of $n$.}%
\label{t5}%
\end{table}

{To conclude the section, we note again that to the best of our knowledge, this is the first paper that adopts the Archimedean copula in the analysis of the large credit portfolio loss and proposes the corresponding importance sampling and conditional Monte Carlo estimators. On the other hand, the importance sampling estimators of \cite{bassamboo2008portfolio}  and conditional Monte Carlo estimators of \cite{chan2010efficient} assume that the dependence structure of obligors are modeled with a $t$-copula. Because of the difference in the underlying assumed dependence structure, 
the estimators considered in this paper are not directly comparable to the corresponding estimators in those two papers.  Nevertheless, by comparing our simulation results to theirs, it is reassuring that even under very different dependence structure, significant variance reduction, especially for the estimator based on the conditional Monte Carlo method, 
can be expected. Furthermore, regardless of the assumed dependence structure, 
all estimators exhibit consistent behavior in the sense that they perform better for weaker dependence structures and larger portfolio sizes. 
}

\section{Conclusion}

In this paper, we consider an Archimedean copula-based model for measuring
portfolio credit risk. The analytic expressions of the probability of such
portfolio incurs large losses is not available and directly applying naive MC
simulation on these rare events are also not efficient. We first derive sharp
asymptotic expansions to study the probability of large portfolio losses and
the expected shortfall of the losses. Using this as a stepping stone, we
develop two efficient algorithms to estimate the risk of a credit portfolio
via simulation. The first one is a two-step full IS algorithm, which can be
used to estimate both  probability and expected shortfall of
portfolio loss. We show that the proposed estimator is logarithmically
efficient. The second algorithm is based on the conditional Monte Carlo
simulation, which can be used to estimate the probability of portfolio loss.
This estimator is shown to have bounded relative error. Through extensive
simulation studies, both algorithms, especially the second one, show
significant variance reductions when compared to naive MC simulations.

\appendix
\renewcommand{\thesubsection}{A.\arabic{subsection}}
\renewcommand{\theequation}{A.\arabic{equation}} \renewcommand{\thelemma}{A.\arabic{lemma}}

\section*{Appendix: Proofs}

To simplify the notation, for any two positive functions $g$ and $h$, we write
$g\lesssim h$ or $h\gtrsim g$ if $\lim \sup g/h\leq1$. \bigskip

\subsection{Proofs for LT-Archimedean copulas}

We first list a series of lemmas that will be useful for proving Theorem \ref{th4.1} and Theorem
\ref{th4.2}. The following is a restatement of Theorem 2 of
\cite{hoeffding1963probability}.

\begin{lemma}
If $X_{1},X_{2},\ldots,X_{n}$ are independent and $a_{i}\leq X_{i}\leq b_{i}$
for $i=1,\ldots,n$, then for $\varepsilon>0$
\[
\mathbb{P}\left(  \left \vert \bar{X}_{n}-\mathbb{E}\left[  \bar{X}_{n}\right]
\right \vert \geq \varepsilon \right)  \leq2\exp \left(  -\frac{2n^{2}%
\varepsilon^{2}}{\sum_{i=1}^{n}(b_{i}-a_{i})^{2}}\right)  ,
\]
\label{leho} with $\bar{X}_{n}=\left(  X_{1}+X_{2}+\ldots+X_{n}\right)  /n$.
\end{lemma}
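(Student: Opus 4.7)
The plan is to prove the two-sided bound by first establishing the one-sided tail inequality
\[
\mathbb{P}\!\left(\bar{X}_n - \mathbb{E}[\bar{X}_n] \geq \varepsilon\right) \leq \exp\!\left(-\frac{2n^2\varepsilon^2}{\sum_{i=1}^n (b_i-a_i)^2}\right)
\]
and then applying the identical argument to the random variables $-X_i$ (which also lie in bounded intervals of the same length) to obtain the matching lower-tail bound. A union bound on these two events produces the factor of $2$ in the stated inequality.

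For the one-sided bound, I would use the standard Chernoff/exponential moment method. Let $Y_i = X_i - \mathbb{E}[X_i]$, so that the $Y_i$ are independent, mean zero, and satisfy $a_i - \mathbb{E}[X_i] \leq Y_i \leq b_i - \mathbb{E}[X_i]$, an interval of length $b_i - a_i$. For any $t > 0$, Markov's inequality applied to the exponential gives
\[
\mathbb{P}\!\left(\sum_{i=1}^n Y_i \geq n\varepsilon\right) \leq e^{-tn\varepsilon}\,\mathbb{E}\!\left[e^{t\sum_{i=1}^n Y_i}\right] = e^{-tn\varepsilon} \prod_{i=1}^n \mathbb{E}[e^{tY_i}],
\]
by independence.

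The main technical step, and what I expect to be the principal obstacle, is Hoeffding's lemma itself: showing that for a mean-zero random variable $Y$ with $a \leq Y \leq b$,
\[
\mathbb{E}[e^{tY}] \leq \exp\!\left(\frac{t^2(b-a)^2}{8}\right).
\]
I would prove this by exploiting the convexity of $x \mapsto e^{tx}$ on $[a,b]$, writing $e^{ty} \leq \frac{b-y}{b-a}e^{ta} + \frac{y-a}{b-a}e^{tb}$, and taking expectations (using $\mathbb{E}[Y]=0$) to obtain $\mathbb{E}[e^{tY}] \leq \frac{b}{b-a}e^{ta} - \frac{a}{b-a}e^{tb}$. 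Taking logarithms defines a function $\psi(t)$ with $\psi(0) = \psi'(0) = 0$, and a direct computation yields $\psi''(t) \leq (b-a)^2/4$ for all $t$ (this is the sharp step, easiest to see by recognizing $\psi''(t)$ as the variance of a Bernoulli-type random variable on $\{a,b\}$, hence bounded by $(b-a)^2/4$). Taylor's theorem then gives $\psi(t) \leq t^2(b-a)^2/8$.

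Plugging the lemma into the product yields
\[
\mathbb{P}\!\left(\sum_{i=1}^n Y_i \geq n\varepsilon\right) \leq \exp\!\left(-tn\varepsilon + \frac{t^2}{8}\sum_{i=1}^n (b_i-a_i)^2\right).
\]
Minimizing the right-hand side over $t > 0$ by choosing $t = 4n\varepsilon / \sum_{i=1}^n (b_i-a_i)^2$ gives the one-sided bound, and combining with the symmetric argument completes the proof.
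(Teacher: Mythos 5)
Your proposal is correct: it is the standard Chernoff-plus-Hoeffding's-lemma proof, and the optimization $t = 4n\varepsilon/\sum_{i=1}^n(b_i-a_i)^2$ does yield the exponent $-2n^2\varepsilon^2/\sum_{i=1}^n(b_i-a_i)^2$ as claimed. The paper itself offers no proof of this lemma — it is quoted verbatim as Theorem 2 of Hoeffding (1963) — so your argument simply supplies the classical proof of the cited result, with no divergence to report.
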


Applying Lemma \ref{leho}, we obtain the following inequality:

\begin{lemma}
\label{leld} For any $\varepsilon>0$ and any large $M$, there exists a
constant $\beta>0$ such that
\[
\mathbb{P}_{v}\left(  \left \vert \frac{1}{n}\sum_{i=1}^{n} c_{i}
1_{\{U_{i}>1-l_{i}f_{n}\}}-r(v)\right \vert \geq \varepsilon \right)  \leq
\exp(-n\beta),
\]
uniformly for all $0<v\leq M$ and for all sufficiently large $n$, where
$\mathbb{P}_{v}$ denotes the original probability measure conditioned on
$V=\frac{v}{\phi(1-f_{n})}$.
\end{lemma}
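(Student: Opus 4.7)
The plan is to split the deviation via the triangle inequality into a stochastic fluctuation term and a deterministic bias term, then handle each separately.

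First I would write
\[
\left|\frac{1}{n}\sum_{i=1}^{n} c_{i} 1_{\{U_{i}>1-l_{i}f_{n}\}}-r(v)\right| \le \underbrace{\left|\frac{1}{n}\sum_{i=1}^{n} c_{i} 1_{\{U_{i}>1-l_{i}f_{n}\}}-\mathbb{E}_v\!\left[\tfrac{1}{n}\sum_{i=1}^{n} c_{i} 1_{\{U_{i}>1-l_{i}f_{n}\}}\right]\right|}_{=:T_1(v)} + \underbrace{\left|\mathbb{E}_v\!\left[\tfrac{1}{n}\sum_{i=1}^{n} c_{i} 1_{\{U_{i}>1-l_{i}f_{n}\}}\right]-r(v)\right|}_{=:T_2(v)},
\]
and show that for all $n$ sufficiently large, $T_2(v) < \varepsilon/2$ uniformly in $v \in (0,M]$, while the event $\{T_1(v) \geq \varepsilon/2\}$ has probability exponentially small in $n$, uniformly in $v$.

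For the bias $T_2(v)$, since the conditional expectation equals $\sum_{j \le |\mathcal{W}|}(n_j/n)\,c_j\, p(v,j)$ with $p(v,j)$ defined in \eqref{p1}, I would combine (i)~$n_j/n \to w_j$ from Assumption~\ref{ass1}, and (ii)~the pointwise limit $p(v,j)\to \tilde{p}(v,j)$ established right after \eqref{p1}. Uniformity in $v \in (0,M]$ follows because $|p(v,j)-\tilde{p}(v,j)| \le v\,\bigl|\phi(1-l_j f_n)/\phi(1-f_n)-l_j^{\alpha}\bigr|$, and the bracketed term vanishes by the regular variation hypothesis $\phi(1-\tfrac{1}{\cdot})\in \mathrm{RV}_{-\alpha}$ (Assumption~\ref{A2}); the prefactor $v$ is bounded by $M$, so convergence is uniform on the bounded interval. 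Since $|\mathcal{W}|$ is finite, the whole sum converges uniformly.

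For the fluctuation $T_1(v)$, conditional on $V=v/\phi(1-f_n)$ the summands $c_i 1_{\{U_i>1-l_i f_n\}}$ are independent and take values in $[0,c_i]\subset [0,c_{\max}]$, where $c_{\max}:=\max_{j\le|\mathcal{W}|}c_j<\infty$ by Assumption~\ref{ass1}. Applying Lemma~\ref{leho} with $a_i=0$, $b_i=c_i$ gives
\[
\mathbb{P}_v\!\left(T_1(v) \geq \varepsilon/2 \right) \le 2\exp\!\left(-\frac{n\,\varepsilon^{2}}{2 c_{\max}^{2}}\right),
\]
which is uniform in $v$ because $c_{\max}$ does not depend on $v$.

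Combining these with the triangle inequality, for $n$ large enough so that $\sup_{0<v\le M}T_2(v)<\varepsilon/2$, the event $\{|\tfrac{1}{n}\sum c_i 1_{\{U_i>1-l_i f_n\}}-r(v)|\ge\varepsilon\}$ is contained in $\{T_1(v)\ge\varepsilon/2\}$, giving the claimed bound with any $\beta\in(0,\varepsilon^{2}/(2 c_{\max}^{2}))$ (the constant $2$ out front being absorbed into a slightly smaller choice of $\beta$ for large $n$). The main obstacle is really just verifying the uniform-in-$v$ convergence of $T_2(v)$; once the regular variation assumption is invoked on the compact range $(0,M]$, everything else is routine Hoeffding bookkeeping.
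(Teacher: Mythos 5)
Your proposal is correct and follows essentially the same route as the paper's proof: Hoeffding's inequality (Lemma \ref{leho}) for the fluctuation about the conditional mean, and a deterministic bias estimate combining $n_j/n\to w_j$ with the bound $|p(v,j)-\tilde{p}(v,j)|\le M\,|\phi(1-l_jf_n)/\phi(1-f_n)-l_j^{\alpha}|$ from regular variation, uniformly on $(0,M]$. The only cosmetic difference is that the paper bounds $\sum_i c_i^2$ directly rather than via $c_{\max}$, which changes nothing.
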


\begin{proof}
Note that $U_{i}$ are conditionally independent on $V$. Then by Lemma
\ref{leho}, for every $n$,
\begin{equation}
\mathbb{P}_{v}\left(  \left \vert \frac{1}{n}\sum_{i=1}^{n} c_{i}
1_{\{U_{i}>1-l_{i}f_{n}\}}-\frac{1}{n}\sum_{i=1}^{n}c_{i}p(v,i)\right \vert
\geq2\varepsilon \right)  \leq2\exp \left(  -\frac{8n^{2}\varepsilon^{2}}%
{\sum_{i=1}^{n} c_{i}^{2}}\right)  \leq \exp(-n\beta), \label{lea2.1}%
\end{equation}
where $\beta$ is some unimportant constant not depending on $n$ and $v$.

Using \eqref{lea2.1}, to obtain the desired result, it suffices to show the
existence of $N$, such for all $n\geq N$,
\begin{equation}
\left \vert \frac{1}{n}\sum_{i=1}^{n}c_{i}p(v,i)-r(v)\right \vert \leq
\varepsilon \label{lea2.2}%
\end{equation}
holds uniformly for all $v\leq M$. Recall that $r(v)=\sum_{j\leq|\mathcal{W}%
|}c_{j}w_{j}\tilde{p}(v,j)$. Note that $n_{j}$ denotes the number of obligors
in sub-portfolio $j$. Then
\begin{align}
\left \vert \frac{1}{n}\sum_{i=1}^{n}c_{i}p(v,i)-r(v)\right \vert  &
=\left \vert \sum_{j\leq|\mathcal{W}|}c_{j}\left(  p(v,j)\frac{n_{j}}{n}%
-\tilde{p}(v,j)w_{j}\right)  \right \vert \nonumber \\
&  \leq \sum_{j\leq|\mathcal{W}|}c_{j}p(v,j)\left \vert \frac{n_{j}}{n}%
-w_{j}\right \vert \nonumber \\
&  +\sum_{j\leq|\mathcal{W}|}c_{j}w_{j}\left \vert p(v,j)-\tilde{p}%
(v,j)\right \vert \nonumber \\
&  \leq \sum_{j\leq|\mathcal{W}|}c_{j}\left \vert \frac{n_{j}}{n}-w_{j}%
\right \vert +\bar{c}\max \limits_{j\leq|\mathcal{W}|}\left \vert p(v,j)-\tilde
{p}(v,j)\right \vert \label{lea2.3}%
\end{align}
where $\bar{c}=\sum_{j\leq|\mathcal{W}|}c_{j}w_{j}$. By Assumption \ref{ass1},
there exists $N_{1}$ satisfying $\sum_{j\leq|\mathcal{W}|}c_{j}\left \vert
\frac{n_{j}}{n}-w_{j}\right \vert \leq \frac{\varepsilon}{2}$ for all $n\geq
N_{1}$. For the second part of \eqref{lea2.3}, by noting that $e^{x}\geq1+x$
for all $x\in \mathbb{R}$, we have
\begin{align*}
\left \vert p(v,j)-\tilde{p}(v,j)\right \vert  &  =\exp \left(  -v\left(
\frac{\phi(1-l_{j}f_{n})}{\phi(1-f_{n})}\wedge l_{j}^{\alpha}\right)  \right)
\left(  1-\exp \left(  -v\left \vert \frac{\phi(1-l_{j}f_{n})}{\phi(1-f_{n}%
)}-l_{j}^{\alpha}\right \vert \right)  \right) \\
&  \leq v\left \vert \frac{\phi(1-l_{j}f_{n})}{\phi(1-f_{n})}-l_{j}^{\alpha
}\right \vert \\
&  \leq M\left \vert \frac{\phi(1-l_{j}f_{n})}{\phi(1-f_{n})}-l_{j}^{\alpha
}\right \vert .
\end{align*}
Since $\phi \in \mathrm{RV}_{\alpha}(1)$, there exists $N_{2}$ such that for all
$n\geq N_{2}$, $\bar{c}\max \limits_{j\leq|\mathcal{W}|,v\in A}\left \vert
p(v,j)-\tilde{p}(v,j)\right \vert \leq \frac{\varepsilon}{2}$.

Combining the upper bound for both parts in \eqref{lea2.3} and letting
$N=\max \{N_{1},N_{2}\}$, \eqref{lea2.2} holds uniformly for all $v\leq M$. The
proof is then completed.\bigskip
\end{proof}

The following proof of Theorem \ref{th4.1} is motivated by the proof of
Theorem 1 in \cite{bassamboo2008portfolio}.

\begin{proof}
[Proof of Theorem \ref{th4.1}]Let $v_{\delta}^{\ast}$ denote the unique
solution to the equation $r(v)=b-\delta$. By using continuity and monotonicity
of $r(v)$ in $v$, we have
\[
v_{\delta}^{\ast}\rightarrow v^{\ast}%
\]
as $\delta \rightarrow0$.

Fix $\delta>0$. We decompose the probability of the event $\{L_{n}>nb\}$ into
two terms as
\begin{align*}
\mathbb{P}\left(  L_{n}>nb\right)   &  =\mathbb{P}\left(  L_{n}>nb,V\leq
\frac{v_{\delta}^{\ast}}{\phi(1-f_{n})}\right)  +\mathbb{P}\left(
L_{n}>nb,V>\frac{v_{\delta}^{\ast}}{\phi(1-f_{n})}\right) \\
&  =I_{1}+I_{2}.
\end{align*}
The remaining part of proof will be divided into three steps. We first show
that $I_{1}$ is asymptotically negligible. Then we develop upper and lower
bounds for $I_{2}$ with the second and third step.

\noindent \textbf{Step 1.} We show $I_{1}=o(f_{n})$. Note that for any $v\leq
v_{\delta}^{\ast}$, $r(v)\leq b-\delta$. Thus, by Lemma \ref{leld}, for all
sufficiently large $n$, there exists a constant $\beta>0$ such that
\[
\mathbb{P}_{v}\left(  L_{n}>nb\right)  \leq \mathbb{P}_{v}\left(  \frac{1}%
{n}\sum_{i=1}^{n}c_{i}1_{\{U_{i}>1-l_{i}f_{n}\}}-r(v)>\delta \right)  \leq
\exp(-n\beta)
\]
uniformly for all $v\leq v_{\delta}^{\ast}$. So the same upper bound holds for
$I_{1}$. Due to the condition on $f_{n}$, $I_{1}=o(f_{n})$.

\noindent \textbf{Step 2.} We now develop an asymptotic upper bound for $I_{2}%
$. Note that
\[
I_{2}\leq \mathbb{P}\left(  V>\frac{v_{\delta}^{\ast}}{\phi(1-f_{n})}\right)
=\overline{F}_{V}\left(  \frac{v_{\delta}^{\ast}}{\phi(1-f_{n})}\right)  .
\]
Recall that $\phi^{-1}$ is the LS transform for random variable $V$. Then by
$\phi(1-\frac{1}{\cdot})\in \mathrm{RV}_{-\alpha}$ and Karamata's Tauberian
theorem, we obtain
\begin{align*}
I_{2}  &  \leq \overline{F}_{V}\left(  \frac{v_{\delta}^{\ast}}{\phi(1-f_{n}%
)}\right) \\
&  \sim \frac{{1-\phi^{-1}\left(  \frac{\phi(1-f_{n})}{v_{\delta}^{\ast}%
}\right)  }}{{\Gamma(1-1/\alpha)}}\\
&  \sim f_{n}\frac{(v_{\delta}^{\ast})^{-1/\alpha}}{\Gamma(1-1/\alpha)},
\end{align*}
where in the first step we used $\overline{F}_{V}\in \mathrm{RV}_{-1/\alpha}$
and the second step is due to $1-\phi^{-1}(\frac{1}{\cdot})\in \mathrm{RV}%
_{1/\alpha}$. Letting $\delta \downarrow0$, we obtain
\begin{equation}
I_{2}\lesssim f_{n}\frac{(v^{\ast})^{-1/\alpha}}{\Gamma(1-1/\alpha)}.
\label{th3.1step2}%
\end{equation}
\noindent \textbf{Step 3.} We now develop an asymptotic lower bound for $I_{2}%
$. Denote $v_{\widehat{\delta}}^{\ast}$ as the unique solution to the equation
$r(v)=b+\delta$. Similarly, we have $v_{\widehat{\delta}}^{\ast}\rightarrow
v^{\ast}$ as $\delta \rightarrow0$. It also follows from the monotonicity of
$r(v)$ that $v_{\widehat{\delta}}^{\ast}\geq v_{\delta}^{\ast}$. Thus,
\[
I_{2}\geq \mathbb{P}\left(  L_{n}>nb,V>\frac{v_{\widehat{\delta}}^{\ast}}%
{\phi(1-f_{n})}\right)  .
\]
Note that for any large $M>0$, applying Lemma \ref{leld}, it holds uniformly
for $v\in \left[  v_{\hat{\delta}}^{\ast},M\right]  $ that $r(v)\geq b+\delta$
and then as $n\rightarrow \infty$, by Lemma \ref{leld}
\begin{align*}
\mathbb{P}_{v}\left(  L_{n}>nb\right)   &  \geq \mathbb{P}_{v}\left(  \frac
{1}{n}\sum_{i=1}^{n}c_{i}1_{\{U_{i}>1-l_{i}f_{n}\}}-r(v)>-\delta \right) \\
&  =1-\mathbb{P}_{v}\left(  \frac{1}{n}\sum_{i=1}^{n}c_{i}1_{\{U_{i}%
>1-l_{i}f_{n}\}}-r(v)\leq-\delta \right)  \rightarrow1.
\end{align*}
Hence,
\begin{align*}
I_{2}  &  \gtrsim \overline{F}_{V}\left(  \frac{v_{\hat{\delta}}^{\ast}}%
{\phi(1-f_{n})}\right)  -\overline{F}_{V}\left(  \frac{M}{\phi(1-f_{n}%
)}\right) \\
&  \sim f_{n}\frac{(v_{\hat{\delta}}^{\ast})^{-1/\alpha}}{\Gamma(1-1/\alpha
)}-f_{n}\frac{M^{-1/\alpha}}{\Gamma(1-1/\alpha)}.
\end{align*}
Taking $M\rightarrow \infty$ followed by $\delta \rightarrow0$, we get
\begin{equation}
I_{2}\gtrsim f_{n}\frac{(v^{\ast})^{-1/\alpha}}{\Gamma(1-1/\alpha)}.
\label{th3.1step3}%
\end{equation}

Combining \eqref{th3.1step2}, \eqref{th3.1step3} with Step 1 completes the
proof of the theorem.\bigskip
\end{proof}

\begin{proof}
[Proof of Theorem \ref{th4.2}]We first note that the expected shortfall can be
rewritten as in (\ref{essurvival}). Using Theorem \ref{th4.1}, in order to get
the desired result, it suffices to show that
\begin{equation}
\int_{b}^{\infty}\mathbb{P}\left(  L_{n}>nx\right)  \mathrm{d}x\sim f_{n}%
\frac{\int_{v^{\ast}}^{\infty}r^{\prime}(v)v^{-1/\alpha}\mathrm{d}v}%
{\Gamma(1-1/\alpha)}. \label{intsurvival}%
\end{equation}
We decompose the left-hand side of \eqref{intsurvival} into the following two
terms
\begin{align*}
\int_{b}^{\infty}\mathbb{P}\left(  L_{n}>nx\right)  \mathrm{d}x  &  =\int
_{b}^{\bar{c}}\mathbb{P}\left(  L_{n}>nx\right)  \mathrm{d}x+\int_{\bar{c}%
}^{\infty}\mathbb{P}\left(  L_{n}>nx\right)  \mathrm{d}x\\
&  :=J_{1}+J_{2},
\end{align*}
where $\bar{c}=\sum_{j\leq|\mathcal{W}|}c_{j}w_{j}$. The remaining part of
proof will be divided into three steps. We first show $\mathbb{P}\left(
L_{n}>n\bar{c}\right)  $ and $J_{2}$ are asymptotically negligible in the
first two steps. Then we develop the asymptotic for $J_{1}$ in the last step.
For simplicity, we denote the unique solution of the equation $r(v)=s$ for
$0\leq s\leq \bar{c}$ by $r^{\leftarrow}(s)$.

\noindent \textbf{Step 1.} In this step, we show
\begin{equation}
\mathbb{P}\left(  L_{n}>n\bar{c}\right)  =o(f_{n}). \label{asycbar}%
\end{equation}
Fix an arbitrarily small $\delta>0$. Proceeding in the same way as in step 1
in the proof of Theorem \ref{th4.1}, for all sufficiently large $n$, there
exists a constant $\beta>0$ such that
\[
\mathbb{P}\left(  L_{n}>n\bar{c},V\leq \frac{r^{\leftarrow}(\bar{c}-\delta
)}{\phi(1-f_{n})}\right)  \leq \exp(-n\beta).
\]
Due to the condition on $f_{n}$ and letting $\delta \downarrow0$, we have the
desired result in \eqref{asycbar}.

\noindent \textbf{Step 2.} In this step, we show $J_{2}=o(f_{n}).$ Note that
$J_{2}$ can be rewritten as follows,
\begin{align*}
J_{2}  &  =\mathbb{E}\left[  \left(  \frac{L_{n}}{n}-\bar{c}\right)
_{+}\right] \\
&  =\mathbb{E}\left[  \left(  \frac{L_{n}}{n}-\bar{c}\right)  1_{\left \{
L_{n}>n\bar{c}\right \}  }\right]  .
\end{align*}
Since $\frac{L_{n}}{n}<\max \limits_{j\leq \vert \mathcal{W}\vert}c_{j}$, we
have
\begin{align*}
J_{2}\leq \left(  \max \limits_{j\leq \vert \mathcal{W}\vert}c_{j}-\bar{c}\right)
\mathbb{P}\left(  L_{n}>n\bar{c}\right)  .
\end{align*}
It follows from \eqref{asycbar} that $J_{2}=o(f_{n})$.

\noindent \textbf{Step 3.} To this end, we show
\[
\lim_{n\rightarrow \infty}\int_{b}^{\bar{c}}\frac{\Gamma(1-1/\alpha)}{f_{n}%
}\mathbb{P}\left(  L_{n}>nx\right)  \mathrm{d}x=\int_{v^{\ast}}^{\infty
}r^{\prime}(v)v^{-1/\alpha}\mathrm{d}v.
\]
First note that, for any $x\in \lbrack b,\bar{c}]$, by Theorem \ref{th4.1} we
have
\[
\lim_{n\rightarrow \infty}\frac{\Gamma(1-1/\alpha)}{f_{n}}\mathbb{P}\left(
L_{n}>nx\right)  =(r^{\leftarrow}(x))^{-1/\alpha}.
\]
Further, the following inequality holds any $x\in \lbrack b,\bar{c}]$
\[
\frac{\Gamma(1-1/\alpha)}{f_{n}}\mathbb{P}\left(  L_{n}>nx\right)  \leq
\frac{\Gamma(1-1/\alpha)}{f_{n}}\mathbb{P}\left(  L_{n}>nb\right)  .
\]
Applying the dominated convergence theorem, we obtain
\begin{align*}
\lim_{n\rightarrow \infty}\int_{b}^{\bar{c}}\frac{\Gamma(1-1/\alpha)}{f_{n}%
}\mathbb{P}\left(  L_{n}>nx\right)  \mathrm{d}x  &  =\int_{b}^{\bar{c}}\left(
\lim_{n\rightarrow \infty}\frac{\Gamma(1-1/\alpha)}{f_{n}}\mathbb{P}\left(
L_{n}>nx\right)  \right)  \mathrm{d}x\\
&  =\int_{b}^{\bar{c}}(r^{\leftarrow}(x))^{-1/\alpha}\mathrm{d}x\\
&  =\int_{v^{\ast}}^{\infty}r^{\prime}(v)v^{-1/\alpha}\mathrm{d}v.
\end{align*}
The last equality is by changing the variable and let $v=r^{\leftarrow}(x)$.

Combing Step 2 and Step 3 completes the proof of the theorem.\bigskip
\end{proof}


\subsection{Proofs for algorithm efficiency}

Lemma \ref{leratiobound} and \ref{lephifn} will be used in proving Lemma
\ref{thasyoptimal}.

\begin{lemma}
\label{leratiobound}For sufficiently large $n$, there exists a constant $C$
such that
\begin{equation}
\frac{f_{V}(x)}{f_{V}^{\ast}(x)}\leq C\left(  -\log \phi(1-f_{n})\right)
\label{ratioC}%
\end{equation}
for all $x$, where $f_{V}^{\ast}(x)$ is defined in \eqref{fvstar}.
\end{lemma}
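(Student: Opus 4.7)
The plan is to split on the breakpoint in \eqref{fvstar} and handle the two regimes separately. For $x < x_0$, we have $f_V(x)/f_V^*(x) \equiv 1$, so the bound \eqref{ratioC} follows trivially once $n$ is large enough that $-\log \phi(1-f_n) \ge 1$; this holds because $f_n \to 0$ and continuity of $\phi$ at $1$ force $\phi(1-f_n) \to \phi(1) = 0$. The real work is on $\{x \ge x_0\}$. Setting $\beta_n := -1/\log\phi(1-f_n)$, note $\beta_n \to 0^+$, and the definition \eqref{fvstar} reads $f_V^*(x) = \overline{F}_V(x_0)\, x_0^{\beta_n}\,\beta_n\, x^{-\beta_n-1}$, so
\[
\frac{f_V(x)}{f_V^*(x)} \;=\; \frac{1}{\beta_n}\cdot \frac{x\,f_V(x)}{\overline{F}_V(x_0)}\cdot \Bigl(\frac{x}{x_0}\Bigr)^{\beta_n}.
\]
Since $1/\beta_n = -\log\phi(1-f_n)$, the lemma will follow once I show $x f_V(x)(x/x_0)^{\beta_n}$ is bounded by a multiple of $\overline{F}_V(x_0)$, uniformly in $x \ge x_0$ and in all sufficiently large $n$.

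For this, I would combine the monotone density theorem with Potter's bounds. Assumption \ref{A2} together with Karamata's Tauberian theorem gives $\overline{F}_V \in \mathrm{RV}_{-1/\alpha}$, and because $V$ has an eventually monotone density, the monotone density theorem yields $x f_V(x)/\overline{F}_V(x) \to 1/\alpha$. Consequently, taking $x_0$ large enough we may assume $x f_V(x) \le (2/\alpha)\overline{F}_V(x)$ for all $x \ge x_0$. Next, Potter's bounds applied to $\overline{F}_V$ produce, for any fixed $\epsilon \in (0, 1/\alpha)$, a constant $A$ and (after possibly enlarging $x_0$) the inequality
\[
\overline{F}_V(x) \;\le\; A\,\overline{F}_V(x_0)\Bigl(\frac{x}{x_0}\Bigr)^{-1/\alpha+\epsilon},\qquad x \ge x_0.
\]
Multiplying these two bounds gives
\[
x f_V(x)\Bigl(\frac{x}{x_0}\Bigr)^{\beta_n} \;\le\; \frac{2A}{\alpha}\,\overline{F}_V(x_0)\Bigl(\frac{x}{x_0}\Bigr)^{\beta_n - 1/\alpha + \epsilon}.
\]
Because $\beta_n \to 0$, for all $n$ sufficiently large we have $\beta_n - 1/\alpha + \epsilon < 0$, so the factor $(x/x_0)^{\beta_n - 1/\alpha + \epsilon}$ is at most $1$ on $[x_0,\infty)$. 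This delivers the uniform bound with explicit constant $C = 2A/\alpha$, completing both regimes.

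The only genuine subtlety is that a single $x_0$, fixed independently of $n$, must simultaneously support the monotone-density bound $x f_V(x) \le (2/\alpha)\overline{F}_V(x)$ and the Potter bound on $\overline{F}_V$; this is handled by choosing $x_0$ to exceed the maximum of the two thresholds provided by the respective theorems. The asymmetry in the statement (an $n$-dependent prefactor $-\log\phi(1-f_n)$ rather than a constant) turns out to be unavoidable, because at the breakpoint $f_V^*(x_0) = \overline{F}_V(x_0)/(x_0(-\log\phi(1-f_n)))$ and so $f_V(x_0)/f_V^*(x_0)$ already grows at precisely this rate; the argument above shows this is also the worst case on $[x_0,\infty)$.
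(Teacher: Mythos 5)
Your proof is correct and follows essentially the same route as the paper: split at $x_0$, use the eventual monotonicity of $f_V$ together with $\overline{F}_V\in\mathrm{RV}_{-1/\alpha}$ (via the monotone density theorem) and Potter's bounds to control the density on $[x_0,\infty)$, and observe that the resulting exponent of $x$ is negative for large $n$ so the power factor is bounded by $1$. The only cosmetic difference is that the paper applies Potter's bounds directly to $f_V\in\mathrm{RV}_{-1/\alpha-1}$, whereas you first bound $xf_V(x)$ by a multiple of $\overline{F}_V(x)$ and then apply Potter's bounds to $\overline{F}_V$; the two are interchangeable.
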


\begin{proof}
By definition of $f_{V}^{\ast}(x)$, the ratio $\frac{f_{V}(x)}{f_{V}^{\ast
}(x)}$ equals $1$ for $x<x_{0}$. Hence, to show \eqref{ratioC}, it suffices to
show the existence of a constant $C$ for all $x\geq x_{0}$.

Note that when $x\geq x_{0}$,
\[
\frac{f_{V}(x)}{f_{V}^{\ast}(x)}=\frac{f_{V}(x)}{\overline{F}_{V}(x_{0})}%
x_{0}^{1/\log \phi(1-f_{n})}\left(  -\log \phi(1-f_{n})\right)  x^{1-\frac
{1}{\log \phi(1-f_{n})}}.
\]
By Assumption \ref{A2} that $V$ has a eventually monotone density function, we
have $f_{V}\in \mathrm{RV}_{-1/\alpha-1}$. Then by Potter's bounds (see e.g.
Theorem B.1.9 (5) of \cite{de2007extreme}), for any small $\varepsilon>0$,
there exists $x_{0}>0$ and a constant $C_{0}>0$ such that for all $x\geq
x_{0}$
\[
f_{V}(x)\leq C_{0}x^{-\frac{1}{\alpha}-1+\varepsilon}.
\]
Thus,
\begin{align}
\frac{f_{V}(x)}{f_{V}^{\ast}(x)}  &  \leq \frac{C_{0}}{\overline{F}_{V}(x_{0}%
)}x_{0}^{1/\log \phi(1-f_{n})}\left(  -\log \phi(1-f_{n})\right)  x^{-1/\alpha
-\frac{1}{\log \phi(1-f_{n})}+\varepsilon}\label{ratiovdelta}\\
&  \leq C\left(  -\log \phi(1-f_{n})\right)  ,\nonumber
\end{align}
which yields our desired result by noting the fact that $x\geq x_{0}$ and
$-1/\alpha-\frac{1}{\log \phi(1-f_{n})}+\varepsilon<0$. \bigskip
\end{proof}

\begin{lemma}
\label{lephifn} If $\phi(1-\frac{1}{\cdot})\in \mathrm{RV}_{-\alpha}$ for some
$\alpha>1$ and $f_{n}$ is a positive deterministic function converging to $0$
as $n\rightarrow \infty$, then
\[
\log \phi(1-f_{n})\sim \alpha \log(f_{n}).
\]

\end{lemma}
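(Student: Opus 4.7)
The plan is to reduce the statement to a standard property of slowly varying functions, namely that $\log L(x) = o(\log x)$ as $x \to \infty$ whenever $L$ is slowly varying. Concretely, by the definition of regular variation, $\phi(1-\tfrac{1}{\cdot}) \in \mathrm{RV}_{-\alpha}$ means we can write
\[
\phi\!\left(1 - \tfrac{1}{t}\right) = t^{-\alpha} L(t), \qquad t \geq 1,
\]
for some slowly varying function $L$. First I would perform the change of variables $s = 1/t$ to translate this into an asymptotic statement near $1$ for $\phi$: for $s \in (0,1]$ small,
\[
\phi(1-s) = s^{\alpha} L(1/s).
\]

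Next I would take logarithms on both sides to obtain
\[
\log \phi(1-s) = \alpha \log s + \log L(1/s),
\]
and set $s = f_n$. Since $f_n \downarrow 0$, we have $1/f_n \uparrow \infty$, and it remains to control the slowly varying contribution $\log L(1/f_n)$ relative to $\log f_n$. The key input is the standard fact (a direct consequence of the representation theorem for slowly varying functions, or of Potter's bounds as already cited in the excerpt via Theorem B.1.9 of \cite{de2007extreme}) that for any slowly varying $L$,
\[
\lim_{x \to \infty} \frac{\log L(x)}{\log x} = 0.
\]
Applied with $x = 1/f_n$, this gives $\log L(1/f_n) = o(-\log f_n)$ as $n \to \infty$.

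Dividing the logarithmic identity by $\log f_n$ (which is negative and tends to $-\infty$) yields
\[
\frac{\log \phi(1-f_n)}{\log f_n} = \alpha + \frac{\log L(1/f_n)}{\log f_n} \longrightarrow \alpha,
\]
which is precisely the claim $\log \phi(1-f_n) \sim \alpha \log f_n$. The only delicate step is the appeal to $\log L(x)/\log x \to 0$; since this is a textbook fact about slowly varying functions, I do not anticipate any genuine obstacle — the proof is essentially a one-line consequence of regular variation once the change of variables $s = 1/t$ is made.
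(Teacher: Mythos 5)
Your proof is correct and takes essentially the same route as the paper: the paper simply cites Proposition B.1.9(1) of de Haan and Ferreira, which is exactly the fact $\log f(t)/\log t \to \alpha$ for $f \in \mathrm{RV}_{\alpha}$ that your argument unpacks via the slowly varying factor $L$ and the standard bound $\log L(x) = o(\log x)$. Your version is just a self-contained expansion of the same one-line appeal to regular variation.
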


\begin{proof}
By Proposition B.1.9(1) of \cite{de2007extreme}, $\phi \in \mathrm{RV}_{\alpha
}(1)$ implies that
\[
\log \phi(1-x)\sim \alpha \log(x)
\]
as $x\rightarrow0$. \bigskip
\end{proof}

The following proof is motivated by the proof of Theorem 3 in
\cite{bassamboo2008portfolio}.

\begin{proof}
[Proof of Lemma \ref{thasyoptimal}]Let
\[
\hat{L}=\prod_{j\leq|\mathcal{W}|}\left(  \frac{p_{j}}{p_{j}^{\ast}}\right)
^{n_{j}Y_{j}}\left(  \frac{1-p_{j}}{1-p_{j}^{\ast}}\right)  ^{n_{j}(1-Y_{j}%
)}.
\]
Note that if $\mathbb{E}\left[  L_{n}\left \vert V=\frac{v}{\phi(1-f_{n}%
)}\right.  \right]  <nb$, $p_{j}^{\ast}=p_{\theta^{\ast}}(V\phi(1-f_{n}),j)$
where $\theta^{\ast}$ is chosen by solving $\Lambda_{L_{n}|V}^{\prime}%
(\theta)=nb$; otherwise $p_{j}^{\ast}=p\left(  V\phi(1-f_{n}),j\right)  $ by
setting $\theta^{\ast}=0$. Besides, \eqref{exptwistln} shows $\hat{L}$ can be
written as follows.
\[
\hat{L}=\exp(-\theta^{\ast}L_{n}|V+\Lambda_{L_{n}|V}(\theta^{\ast})).
\]
Then it follows that, for any $v$,
\[
1_{\left \{  L_{n}>nb,V=\frac{v}{\phi(1-f_{n})}\right \}  }\hat{L}%
\leq1_{\left \{  L_{n}>nb,V=\frac{v}{\phi(1-f_{n})}\right \}  }\exp
(-\theta^{\ast}nb+\Lambda_{L_{n}|V}(\theta^{\ast}))\qquad \text{a.s.}%
\]
Since $\Lambda_{L_{n}|V}(\theta)$ is a strictly convex function, one can
observe that $-\theta nb+\Lambda_{L_{n}|V}(\theta)$ is minimized at
$\theta^{\ast}$ and equals 0 at $\theta=0$. Hence, the following relation
\begin{equation}
1_{\left \{  L_{n}>nb,V=\frac{v}{\phi(1-f_{n})}\right \}  }\hat{L}%
\leq1_{\left \{  L_{n}>nb,V=\frac{v}{\phi(1-f_{n})}\right \}  }\qquad \text{a.s.}
\label{indicatorineq}%
\end{equation}
holds for any $v$.

To prove the theorem, now we re-express
\begin{align*}
\mathbb{E}^{\ast}\left[  1_{\{L_{n}>nb\}}L^{\ast^{2}}\right]   &
=\mathbb{E}^{\ast}\left[  1_{\left \{  L_{n}>nb,V\leq \frac{v_{\delta}^{\ast}%
}{\phi(1-f_{n})}\right \}  }L^{\ast^{2}}\right]  +\mathbb{E}^{\ast}\left[
1_{\left \{  L_{n}>nb,V>\frac{v_{\delta}^{\ast}}{\phi(1-f_{n})}\right \}
}L^{\ast^{2}}\right] \\
&  =K_{1}+K_{2},
\end{align*}
where $v_{\delta}^{\ast}$ is the unique solution to the equation
$r(v)=b-\delta$.

The remaining part of proof will be divided into three steps.

\noindent \textbf{Step 1.} In this step, we show
\[
K_{1}=o(f_{n}).
\]
By Lemma \ref{leratiobound}, for sufficiently large $n$, there exists a finite
positive constant $C$ such that
\[
\frac{f_{V}(v)}{f_{V}^{\ast}(v)}\leq C\left(  -\log \phi(1-f_{n})\right)
\]
for all $v$. From \eqref{indicatorineq}, it then follows that
\[
1_{\left \{  L_{n}>nb,V\leq \frac{v_{\delta}^{\ast}}{\phi(1-f_{n})}\right \}
}L^{\ast^{2}}\leq C\left(  -\log \phi(1-f_{n})\right)  \left(  1_{\left \{
L_{n}>nb,V\leq \frac{v_{\delta}^{\ast}}{\phi(1-f_{n})}\right \}  }L^{\ast
}\right)  \qquad \text{a.s.}%
\]
Therefore, $K_{1}$ is upper bounded by
\begin{align*}
\mathbb{E}^{\ast}\left[  1_{\left \{  L_{n}>nb,V\leq \frac{v_{\delta}^{\ast}%
}{\phi(1-f_{n})}\right \}  }L^{\ast^{2}}\right]   &  \leq C\left(  -\log
\phi(1-f_{n})\right)  \left(  \mathbb{E}^{\ast}\left[  1_{\left \{
L_{n}>nb,V\leq \frac{v_{\delta}^{\ast}}{\phi(1-f_{n})}\right \}  }L^{\ast
}\right]  \right) \\
&  =C\left(  -\log \phi(1-f_{n})\right)  \left(  \mathbb{P}\left(
L_{n}>nb,V\leq \frac{v_{\delta}^{\ast}}{\phi(1-f_{n})}\right)  \right) \\
&  \leq C\left(  -\log \phi(1-f_{n})\right)  \exp(-\beta n).
\end{align*}
The last step is due to step 1 in the proof of Theorem \ref{th4.1}. Moreover,
by Lemma \ref{lephifn}, $-\log \phi(1-f_{n})\sim \alpha \log \left(  \frac
{1}{f_{n}}\right)  =o\left(  \frac{1}{f_{n}}\right)  $. Note $f_{n}$ has a
sub-exponential decay rate, it implies $\frac{1}{f_{n}}\exp(-\beta
n/2)\rightarrow0$. Therefore, $K_{1}$ is still $o(f_{n})$.

\noindent \textbf{Step 2.} We show that
\begin{equation}
\limsup_{n\rightarrow \infty}\frac{\log K_{2}}{\log f_{n}}\leq2.
\label{k2bound}%
\end{equation}
By Jensen's inequality,
\begin{align*}
\mathbb{E}^{\ast}\left[  1_{\left \{  L_{n}>nb,V>\frac{v_{\delta}^{\ast}}%
{\phi(1-f_{n})}\right \}  }L^{\ast^{2}}\right]   &  \geq \left(  \mathbb{E}%
^{\ast}\left[  1_{\left \{  L_{n}>nb,V>\frac{v_{\delta}^{\ast}}{\phi(1-f_{n}%
)}\right \}  }L^{\ast}\right]  \right)  ^{2}\\
&  =\left(  \mathbb{P}\left(  L_{n}>nb,V>\frac{v_{\delta}^{\ast}}{\phi
(1-f_{n})}\right)  \right)  ^{2}\\
&  \sim f_{n}^{2}\left(  \frac{(v^{\ast})^{-1/\alpha}}{\Gamma(1-1/\alpha
)}\right)  ^{2},
\end{align*}
where the last step is due to Theorem \ref{th4.1}. Then \eqref{k2bound}
follows by applying the logarithm function on both sides and using the fact
that $\log \left(  f_{n}\right)  <0$ for all sufficiently large $n$.

\noindent \textbf{Step 3.} We show that
\begin{equation}
\liminf_{n\rightarrow \infty}\frac{\log K_{2}}{\log f_{n}}\geq2.
\label{k3bound}%
\end{equation}
First note that, on the set $\left \{  L_{n}>nb,V>\frac{v_{\delta}^{\ast}}%
{\phi(1-f_{n})}\right \}  $, by \eqref{indicatorineq} the likelihood ratio
$L^{\ast}$ is upper bounded by $\frac{f_{V}(v)}{f_{V}^{\ast}(v)}$ and hence by
\eqref{ratiovdelta}, with sufficiently large $n$, it holds for all
$v>\frac{v_{\delta}^{\ast}}{\phi(1-f_{n})}$ that
\begin{align*}
\frac{f_{V}(v)}{f_{V}^{\ast}(v)}  &  <\frac{C_{0}}{\overline{F}_{V}(x_{0}%
)}x_{0}^{1/\log \phi(1-f_{n})}\left(  -\log \phi(1-f_{n})\right)  v^{-1/\alpha
-\frac{1}{\log \phi(1-f_{n})}+\varepsilon}\\
&  \leq C\left(  -\log \phi(1-f_{n})\right)  \left(  \frac{v_{\delta}^{\ast}%
}{\phi(1-f_{n})}\right)  ^{-1/\alpha-\frac{1}{\log \phi(1-f_{n})}+\varepsilon
}\\
&  <C\left(  -\log \phi(1-f_{n})\right)  \left(  \phi(1-f_{n})\right)
^{1/\alpha+\frac{1}{\log \phi(1-f_{n})}-\varepsilon}.
\end{align*}
Multiplying it with the indicator and taking expectation under $\mathbb{E}%
^{\ast}$, we obtain
\[
\mathbb{E}^{\ast}\left[  1_{\left \{  L_{n}>nb,V>\frac{v_{\delta}^{\ast}}%
{\phi(1-f_{n})}\right \}  }L^{\ast^{2}}\right]  \leq C^{2}\left(  -\log
\phi(1-f_{n})\right)  ^{2}\left(  \phi(1-f_{n})\right)  ^{2/\alpha+\frac
{2}{\log \phi(1-f_{n})}-2\varepsilon}.
\]
Then, taking logarithms on both sides, dividing by $\log f_{n}$ and by Lemma
\ref{lephifn}, we obtain
\[
\liminf_{n\rightarrow \infty}\frac{\log \mathbb{E}^{\ast}\left[  1_{\left \{
L_{n}>nb,V>\frac{v_{\delta}^{\ast}}{\phi(1-f_{n})}\right \}  }L^{\ast^{2}%
}\right]  }{\log f_{n}}\geq2-2\alpha \varepsilon.
\]
Finally, \eqref{k3bound} is yield by letting $\varepsilon \downarrow0$.

Combining Step 1, Step 2 and Step 3, the desired result asserted in the
theorem is obtained.\bigskip
\end{proof}

The following two proofs are motivated by \cite{chan2010efficient}. Lemma
\ref{leorder} below will be used in proving Lemma \ref{thboundederror}.

\begin{lemma}
\label{leorder} Let $R_{1},\ldots,R_{n}$ be an i.i.d. sequence of standard
exponential random variables. Suppose $R_{(k)}$ is the $k$th order statistic
and $\lim_{n\to \infty}\frac{k}{n}=a<1$. Then, for every $\varepsilon>0$, there
exists a constant $\beta>0$ such that the following inequality
\[
\mathbb{P}\left(  \left \vert R_{(k)}-\log \left(  \frac{1}{1-a}\right)
\right \vert \geq \varepsilon \right)  \leq \frac{\beta}{n}.
\]
holds for all sufficiently large $n$.
\end{lemma}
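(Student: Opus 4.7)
The plan is to exploit the Rényi representation for order statistics of exponentials, which gives exactly the moment control we need, and then conclude by Chebyshev's inequality.

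First, I would recall that for i.i.d.\ standard exponentials $R_1,\ldots,R_n$, the order statistics admit the representation
\[
R_{(k)} \stackrel{d}{=} \sum_{i=1}^{k} \frac{E_i}{n-i+1},
\]
where $E_1,\ldots,E_k$ are i.i.d.\ standard exponentials. From this representation one reads off
\[
\mathbb{E}[R_{(k)}] = \sum_{i=1}^{k}\frac{1}{n-i+1} = H_n - H_{n-k},
\qquad
\mathrm{Var}(R_{(k)}) = \sum_{i=1}^{k}\frac{1}{(n-i+1)^2},
\]
where $H_m$ denotes the $m$-th harmonic number.

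Next I would pin down the two key asymptotic estimates. Using $H_m = \log m + \gamma + O(1/m)$, one obtains
\[
\mathbb{E}[R_{(k)}] = \log\!\Big(\frac{n}{n-k}\Big) + O\!\Big(\frac{1}{n-k}\Big),
\]
which, together with $k/n\to a<1$ (so that $n-k$ grows linearly in $n$), converges to $\log(1/(1-a))$. In particular, for $n$ sufficiently large, $|\mathbb{E}[R_{(k)}] - \log(1/(1-a))| < \varepsilon/2$. For the variance, bound the tail of the sum by a telescoping/integral comparison:
\[
\mathrm{Var}(R_{(k)}) = \sum_{j=n-k+1}^{n}\frac{1}{j^2} \leq \frac{1}{n-k} = O\!\Big(\frac{1}{n}\Big),
\]
where I again used $n-k \sim (1-a)n$.

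The final step is Chebyshev: for all sufficiently large $n$,
\[
\mathbb{P}\!\left(\left|R_{(k)} - \log\!\Big(\tfrac{1}{1-a}\Big)\right| \geq \varepsilon\right)
\leq \mathbb{P}\!\left(\left|R_{(k)} - \mathbb{E}[R_{(k)}]\right| \geq \varepsilon/2\right)
\leq \frac{4\,\mathrm{Var}(R_{(k)})}{\varepsilon^{2}}
\leq \frac{\beta}{n},
\]
for an appropriate constant $\beta = \beta(\varepsilon,a)$. I do not anticipate a real obstacle here; the only delicate point is making sure that both the bias $|\mathbb{E}[R_{(k)}]-\log(1/(1-a))|$ and the dominant variance contribution are handled with explicit $O(1/n)$ bounds, which both follow from $k/n\to a<1$ keeping $n-k$ of order $n$.
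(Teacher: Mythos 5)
Your proposal is correct and follows essentially the same route as the paper's proof: the R\'enyi representation for exponential order statistics, harmonic-number asymptotics for the mean and variance, and Chebyshev's inequality. Your explicit splitting of $\varepsilon$ into a bias term and a deviation term is a slightly more careful write-up of the same argument, and your variance bound $\sum_{j=n-k+1}^{n} j^{-2} \leq 1/(n-k) = O(1/n)$ is a valid elementary substitute for the paper's appeal to the asymptotic expansion of the second-order harmonic numbers.
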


\begin{proof}
For i.i.d. standard exponential random variables $R_{i},i=1,\ldots,n$, it
follows from \cite{renyi1953theory} that
\[
R_{(k)}\overset{d}{=}\sum_{j=1}^{k}\frac{R_{j}}{n-j+1}.
\]
Then,
\begin{equation}
\mathbb{E}[R_{(k)}]=\sum_{j=1}^{k}\frac{1}{n-j+1}=H_{n}-H_{n-k}\to \log \left(
\frac{1}{1-a}\right)  , \quad \text{as }n\to \infty, \label{expectrk}%
\end{equation}
where $H_{n}$ denotes the $n$th harmonic number, i.e., $H_{n}=1+\frac{1}%
{2}+\cdots+\frac{1}{n}$ for $n\geq1$. \eqref{expectrk} is verified by noting
the following asymptotic expansion; see, e.g., \cite{berndt1998ramanujan},
\[
H_{n}\sim \log(n)+\gamma+O\left(  \frac{1}{n}\right)  ,
\]
and $\gamma$ is the Euler's constant. Similarly,
\begin{equation}
\mathrm{Var}[R_{(k)}]=\sum_{j=1}^{k}\left(  \frac{1}{n-j+1}\right)  ^{2}%
=H_{n}^{(2)}-H_{n-k}^{(2)}\sim \frac{a}{1-a}\frac{1}{n}, \quad \text{as }%
n\to \infty, \label{varrk}%
\end{equation}
where $H_{n}^{(2)}$ is the $n$th harmonic number of order 2, i.e.,
$H_{n}^{(2)}=1+\frac{1}{2^{2}}+\cdots+\frac{1}{n^{2}}$ for $n\geq1$.
\eqref{varrk} is derived by applying the asymptotic expansion of $H_{n}^{(2)}%
$; see, e.g., \cite{berndt1998ramanujan},
\[
H_{n}^{(2)}\sim \frac{\pi^{2}}{6}-\frac{1}{n}+O\left(  \frac{1}{n^{2}}\right)
.
\]
Then, by Chebyshev's inequality, it follows that, for every $n>0$,
\[
\mathbb{P}\left(  \vert R_{(k)}-\mathbb{E}[R_{(k)}]\vert \geq \varepsilon
\right)  \leq \frac{\mathrm{Var}[R_{(k)}]}{\varepsilon^{2}}.
\]
Due to \eqref{expectrk} and \eqref{varrk}, there exists $N$, such that for all
$n\geq N$,
\[
\mathbb{P}\left(  \left \vert R_{(k)}-\log \left(  \frac{1}{1-a}\right)
\right \vert \geq \varepsilon \right)  \leq \frac{\beta}{n},
\]
where $\beta$ only depends on $\varepsilon$ and $a$. \bigskip
\end{proof}

\begin{proof}
[Proof of Lemma \ref{thboundederror}]Recall that $O_{i}=\frac{R_{i}}%
{\phi(1-l_{i}f_{n})}$, for all $i=1,\ldots,n$. Then the order statistic
$O_{(k)}$ is almost surely lower bounded by
\[
\frac{R_{(k)}}{\phi \left(  1-\max \limits_{j\leq|\mathcal{W}|}l_{j}%
f_{n}\right)  }.
\]
Since $k=\min \{l:\sum_{i=1}^{l}c_{(i)}>nb\}$, we have
\[
\liminf_{n\rightarrow \infty}\frac{k}{n}\geq \frac{b}{\max \limits_{j\leq
|\mathcal{W}|}c_{j}}:=b^{\prime}.
\]
Fix $\varepsilon>0$. For all sufficiently large $n$, $\mathbb{E}\left[
S^{2}(\mathbf{R})\right]  $ can be bounded as follows,
\begin{align*}
\mathbb{E}\left[  S^{2}(\mathbf{R})\right]   &  \leq \mathbb{E}\left[
\mathbb{P}\left(  V>\frac{R_{(\lfloor nb^{\prime}\rfloor)}}{\phi \left(
1-\max \limits_{j\leq|\mathcal{W}|}l_{j}f_{n})\right)  }\right)  ^{2}\right] \\
&  \leq \mathbb{E}\left[  \left(  \mathbb{P}\left(  V>\frac{R_{(\lfloor
nb^{\prime}\rfloor)}}{\phi \left(  1-\max \limits_{j\leq|\mathcal{W}|}l_{j}%
f_{n})\right)  },R_{(\lfloor nb^{\prime}\rfloor)}\geq \log \left(  \frac
{1}{1-b^{\prime}}\right)  -\varepsilon \right)  \right.  \right. \\
&  +\left.  \left.  \mathbb{P}\left(  V>\frac{R_{(\lfloor nb^{\prime}\rfloor
)}}{\phi \left(  1-\max \limits_{j\leq|\mathcal{W}|}l_{j}f_{n})\right)
},R_{(\lfloor nb^{\prime}\rfloor)}<\log \left(  \frac{1}{1-b^{\prime}}\right)
-\varepsilon \right)  \right)  ^{2}\right] \\
&  \leq \left(  \mathbb{P}\left(  V>\frac{\log \left(  \frac{1}{1-b^{\prime}%
}\right)  -\varepsilon}{\phi \left(  1-\max \limits_{j\leq|\mathcal{W}|}%
l_{j}f_{n})\right)  }\right)  +\mathbb{P}\left(  R_{(\lfloor nb^{\prime
}\rfloor)}<\log \left(  \frac{1}{1-b^{\prime}}\right)  -\varepsilon \right)
\right)  ^{2}.
\end{align*}
Then,
\begin{align*}
\limsup_{n\rightarrow \infty}\frac{\mathbb{E}\left[  S^{2}(\mathbf{R})\right]
}{f_{n}^{2}}  &  \leq \left(  \limsup_{n\rightarrow \infty}\frac{\mathbb{P}%
\left(  V>\frac{\log \left(  \frac{1}{1-b^{\prime}}\right)  -\varepsilon}%
{\phi \left(  1-\max \limits_{j\leq|\mathcal{W}|}l_{j}f_{n})\right)  }\right)
}{f_{n}}+\limsup_{n\rightarrow \infty}\frac{\mathbb{P}\left(  R_{(\lfloor
nb^{\prime}\rfloor)}<\log \left(  \frac{1}{1-b^{\prime}}\right)  -\varepsilon
\right)  }{f_{n}}\right)  ^{2}\\
&  \leq \left(  \max \limits_{j\leq|\mathcal{W}|}l_{j}\frac{\left(  \log \left(
\frac{1}{1-b^{\prime}}\right)  -\varepsilon \right)  ^{-1/\alpha}}%
{\Gamma(1-1/\alpha)}+M\right)  ^{2}<\infty.
\end{align*}
The last step is due to the regular variation of $V$, Lemma \ref{leorder} and
the condition that $\frac{1}{n}=O(f_{n})$.
\end{proof}
\bigskip

\textbf{Acknowledgments}

We are grateful to the Editor and the anonymous reviewer for the helpful comments and
suggestions that have greatly improved the presentation of the paper. Ken Seng Tan acknowledges the research funding from the Society of Actuaries CAE’s grant and the Singapore University Grant. Fan Yang acknowledges financial
support from the Natural Sciences and Engineering Research Council of Canada
(grant number: 04242).

\bibliographystyle{apalike}
\bibliography{reference_1}

\begin{thebibliography}{}

\bibitem[Albrecher et~al., 2011]{albrecher2011explicit}
Albrecher, H., Constantinescu, C., and Loisel, S. (2011).
\newblock Explicit ruin formulas for models with dependence among risks.
\newblock {\em Insurance: Mathematics and Economics}, 48(2):265--270.

\bibitem[Asmussen, 2018]{asmussen2018conditional}
Asmussen, S. (2018).
\newblock Conditional {M}onte {C}arlo for sums, with applications to insurance
  and finance.
\newblock {\em Annals of Actuarial Science}, 12(2):455--478.

\bibitem[Asmussen et~al., 2000]{asmussen2000rare}
Asmussen, S., Binswanger, K., H{\o}jgaard, B., et~al. (2000).
\newblock Rare events simulation for heavy-tailed distributions.
\newblock {\em Bernoulli}, 6(2):303--322.

\bibitem[Asmussen and Kroese, 2006]{asmussen2006improved}
Asmussen, S. and Kroese, D.~P. (2006).
\newblock Improved algorithms for rare event simulation with heavy tails.
\newblock {\em Advances in Applied Probability}, 38(2):545--558.

\bibitem[Baso\u{g}lu et~al., 2018]{BHS2018}
Baso\u{g}lu, I., H\"ormann, W., and Sak, H. (2018).
\newblock Efficient simulations for a {Bernoulli} mixture model of portfolio
  credit risk.
\newblock {\em Annals of Operations Research}, 260:113--128.

\bibitem[Bassamboo et~al., 2008]{bassamboo2008portfolio}
Bassamboo, A., Juneja, S., and Zeevi, A. (2008).
\newblock Portfolio credit risk with extremal dependence: asymptotic analysis
  and efficient simulation.
\newblock {\em Operations Research}, 56(3):593--606.

\bibitem[Berndt, 1998]{berndt1998ramanujan}
Berndt, B.~C. (1998).
\newblock {\em Ramanujan’s Notebooks Part V}.
\newblock Springer Science \& Business Media.

\bibitem[Bingham et~al., 1989]{bingham1989regular}
Bingham, N.~H., Goldie, C.~M., and Teugels, J.~L. (1989).
\newblock {\em Regular variation}, volume~27.
\newblock Cambridge University Press.

\bibitem[Chan and Kroese, 2010]{chan2010efficient}
Chan, J.~C. and Kroese, D.~P. (2010).
\newblock Efficient estimation of large portfolio loss probabilities in
  $t$-copula models.
\newblock {\em European Journal of Operational Research}, 205(2):361--367.

\bibitem[Charpentier and Segers, 2009]{charpentier2009tails}
Charpentier, A. and Segers, J. (2009).
\newblock Tails of multivariate {A}rchimedean copulas.
\newblock {\em Journal of Multivariate Analysis}, 100(7):1521--1537.

\bibitem[Cherubini et~al., 2004]{cherubini2004copula}
Cherubini, U., Luciano, E., and Vecchiato, W. (2004).
\newblock {\em Copula methods in finance}.
\newblock John Wiley \& Sons.

\bibitem[Cossette et~al., 2018]{cossette2018dependent}
Cossette, H., Marceau, E., Mtalai, I., and Veilleux, D. (2018).
\newblock Dependent risk models with archimedean copulas: A computational
  strategy based on common mixtures and applications.
\newblock {\em Insurance: Mathematics and Economics}, 78:53--71.

\bibitem[de~Haan and Ferreira, 2007]{de2007extreme}
de~Haan, L. and Ferreira, A. (2007).
\newblock {\em Extreme value theory: an introduction}.
\newblock Springer Science \& Business Media.

\bibitem[Denuit et~al., 2004]{denuit2004bivariate}
Denuit, M., Purcaru, O., Van~Keilegom, I., et~al. (2004).
\newblock Bivariate archimedean copula modelling for loss-alae data in non-life
  insurance.
\newblock {\em IS Discussion Papers}, 423.

\bibitem[Embrechts et~al., 2001]{embrechts2001modelling}
Embrechts, P., Lindskog, F., and McNeil, A. (2001).
\newblock Modelling dependence with copulas.
\newblock {\em Rapport technique, D{\'e}partement de math{\'e}matiques,
  Institut F{\'e}d{\'e}ral de Technologie de Zurich, Zurich}, 14.

\bibitem[Feller, 1971]{feller1971introduction}
Feller, W. (1971).
\newblock {\em An Introduction to Probability Theory and its Applications},
  volume~2.
\newblock John Wiley \& Sons.

\bibitem[Frees and Valdez, 1998]{frees1998understanding}
Frees, E.~W. and Valdez, E.~A. (1998).
\newblock Understanding relationships using copulas.
\newblock {\em North American actuarial journal}, 2(1):1--25.

\bibitem[Genest and Favre, 2007]{genest2007everything}
Genest, C. and Favre, A.-C. (2007).
\newblock Everything you always wanted to know about copula modeling but were
  afraid to ask.
\newblock {\em Journal of hydrologic engineering}, 12(4):347--368.

\bibitem[Glasserman, 2004]{glasserman2004tail}
Glasserman, P. (2004).
\newblock Tail approximations for portfolio credit risk.
\newblock {\em The Journal of Derivatives}, 12(2):24--42.

\bibitem[Glasserman et~al., 2007]{glasserman2007large}
Glasserman, P., Kang, W., and Shahabuddin, P. (2007).
\newblock Large deviations in multifactor portfolio credit risk.
\newblock {\em Mathematical Finance}, 17(3):345--379.

\bibitem[Glasserman et~al., 2008]{glasserman2008fast}
Glasserman, P., Kang, W., and Shahabuddin, P. (2008).
\newblock Fast simulation of multifactor portfolio credit risk.
\newblock {\em Operations Research}, 56(5):1200--1217.

\bibitem[Glasserman and Li, 2005]{glasserman2005importance}
Glasserman, P. and Li, J. (2005).
\newblock Importance sampling for portfolio credit risk.
\newblock {\em Management Science}, 51(11):1643--1656.

\bibitem[Gordy, 2003]{gordy2003risk}
Gordy, M.~B. (2003).
\newblock A risk-factor model foundation for ratings-based bank capital rules.
\newblock {\em Journal of Financial Intermediation}, 12(3):199--232.

\bibitem[Gupton et~al., 1997]{gupton1997creditmetrics}
Gupton, G.~M., Finger, C.~C., and Bhatia, M. (1997).
\newblock {\em Creditmetrics: technical document}.
\newblock JP Morgan \& Co.

\bibitem[Hoeffding, 1963]{hoeffding1963probability}
Hoeffding, W. (1963).
\newblock Probability inequalities for sums of bounded random variables.
\newblock {\em Journal of the American Statistical Association},
  58(301):13--30.

\bibitem[Hofert, 2008]{hofert2008sampling}
Hofert, M. (2008).
\newblock Sampling {A}rchimedean copulas.
\newblock {\em Comput. Statist. Data Anal.}, 52(12):5163--5174.

\bibitem[Hofert, 2010]{hofert2010sampling}
Hofert, M. (2010).
\newblock {\em Sampling nested Archimedean copulas with applications to CDO
  pricing}.
\newblock PhD thesis, Universit{\"a}t Ulm.

\bibitem[Hofert et~al., 2013]{hofert2013archimedean}
Hofert, M., M{\"a}chler, M., and McNeil, A.~J. (2013).
\newblock Archimedean copulas in high dimensions: Estimators and numerical
  challenges motivated by financial applications.
\newblock {\em Journal de la Soci{\'e}t{\'e} Fran{\c{c}}aise de Statistique},
  154(1):25--63.

\bibitem[Hofert and Scherer, 2011]{hofert2011cdo}
Hofert, M. and Scherer, M. (2011).
\newblock Cdo pricing with nested archimedean copulas.
\newblock {\em Quantitative Finance}, 11(5):775--787.

\bibitem[Hong et~al., 2014]{hong2014estimating}
Hong, L.~J., Juneja, S., and Luo, J. (2014).
\newblock Estimating sensitivities of portfolio credit risk using {M}onte
  {C}arlo.
\newblock {\em INFORMS Journal on Computing}, 26(4):848--865.

\bibitem[Juneja et~al., 2007]{juneja2007asymptotics}
Juneja, S., Karandikar, R., and Shahabuddin, P. (2007).
\newblock Asymptotics and fast simulation for tail probabilities of maximum of
  sums of few random variables.
\newblock {\em ACM Transactions on Modeling and Computer Simulation (TOMACS)},
  17(2):7.

\bibitem[Juneja and Shahabuddin, 2002]{juneja2002simulating}
Juneja, S. and Shahabuddin, P. (2002).
\newblock Simulating heavy tailed processes using delayed hazard rate twisting.
\newblock {\em ACM Transactions on Modeling and Computer Simulation (TOMACS)},
  12(2):94--118.

\bibitem[Kealhofer and Bohn, 2001]{kealhofer2001portfolio}
Kealhofer, S. and Bohn, J. (2001).
\newblock Portfolio management of credit risk.
\newblock {\em Technical report}.

\bibitem[Marshall and Olkin, 1988]{marshall1988families}
Marshall, A.~W. and Olkin, I. (1988).
\newblock Families of multivariate distributions.
\newblock {\em Journal of the American Statistical Association},
  83(403):834--841.

\bibitem[McLeish, 2010]{mcleish2010bounded}
McLeish, D.~L. (2010).
\newblock Bounded relative error importance sampling and rare event simulation.
\newblock {\em ASTIN Bulletin: The Journal of the IAA}, 40(1):377--398.

\bibitem[McNeil et~al., 2015]{mcneil2015quantitative}
McNeil, A.~J., Frey, R., and Embrechts, P. (2015).
\newblock {\em Quantitative risk management: Concepts, techniques and tools}.
\newblock Princeton University Press.

\bibitem[Merton, 1974]{merton1974pricing}
Merton, R.~C. (1974).
\newblock On the pricing of corporate debt: The risk structure of interest
  rates.
\newblock {\em The Journal of Finance}, 29(2):449--470.

\bibitem[Naifar, 2011]{naifar2011modelling}
Naifar, N. (2011).
\newblock Modelling dependence structure with archimedean copulas and
  applications to the itraxx cds index.
\newblock {\em Journal of Computational and Applied Mathematics},
  235(8):2459--2466.

\bibitem[Okhrin et~al., 2013]{okhrin2013structure}
Okhrin, O., Okhrin, Y., and Schmid, W. (2013).
\newblock On the structure and estimation of hierarchical archimedean copulas.
\newblock {\em Journal of Econometrics}, 173(2):189--204.

\bibitem[R{\'e}nyi, 1953]{renyi1953theory}
R{\'e}nyi, A. (1953).
\newblock On the theory of order statistics.
\newblock {\em Acta Mathematica Hungarica}, 4(3-4):191--231.

\bibitem[Resnick, 2013]{resnick2013extreme}
Resnick, S.~I. (2013).
\newblock {\em Extreme values, regular variation and point processes}.
\newblock Springer.

\bibitem[Tang et~al., 2019]{tang2019sharp}
Tang, Q., Tang, Z., and Yang, Y. (2019).
\newblock Sharp asymptotics for large portfolio losses under extreme risks.
\newblock {\em European Journal of Operational Research}.

\bibitem[Tong et~al., 2016]{tong2016exposure}
Tong, E.~N., Mues, C., Brown, I., and Thomas, L.~C. (2016).
\newblock Exposure at default models with and without the credit conversion
  factor.
\newblock {\em European Journal of Operational Research}, 252(3):910--920.

\bibitem[Wang, 2003]{wang2003estimating}
Wang, W. (2003).
\newblock Estimating the association parameter for copula models under
  dependent censoring.
\newblock {\em Journal of the Royal Statistical Society: Series B (Statistical
  Methodology)}, 65(1):257--273.

\bibitem[Zhang and Singh, 2007]{zhang2007bivariate}
Zhang, L. and Singh, V.~P. (2007).
\newblock Bivariate rainfall frequency distributions using archimedean copulas.
\newblock {\em Journal of Hydrology}, 332(1-2):93--109.

\bibitem[Zhu et~al., 2016]{ZhuWangTan2016}
Zhu, W., Wang, C., and Tan, K.~S. (2016).
\newblock Levy subordinated hierarchical {Archimedean} copula: Theory and
  application.
\newblock {\em Journal of Banking and Finance}, 69:20--36.

\end{thebibliography}

\end{document}